\documentclass[journal,onecolumn]{IEEEtran}
\usepackage{generic}
\usepackage{cite}
\usepackage{amsmath,amssymb,amsfonts}
\usepackage{algorithmic}
\usepackage{graphicx}
\usepackage{textcomp}
\usepackage{amsfonts,amssymb,amsmath}
\usepackage{graphicx,graphics,epsfig,color}
\usepackage{multicol}
\usepackage{float}
\usepackage{subcaption}
\usepackage{comment}
\usepackage{caption}

\usepackage{graphicx} 
\usepackage{here}
\usepackage{amsmath}
\usepackage{amsthm}
\usepackage{amssymb}
\usepackage{amsfonts}
\usepackage{graphicx}
\usepackage{subcaption}
\usepackage{lipsum}
\usepackage{xpatch}
\xpatchcmd{\paragraph}{\normalfont}{{\normalfont\bfseries}}{}{}
\usepackage{color}
\usepackage{dsfont}

\usepackage{nicematrix}

\usepackage[ruled,vlined,linesnumbered]{algorithm2e}

\newtheorem{theorem}{Theorem}
\newtheorem{remark}{Remark}
\newtheorem{lemma}{Lemma}
\newtheorem{proposition}{Proposition}

\newtheorem{definition}{Definition}

\newcommand{\R}{{\mathbb{R}}}

\newcommand{\N}{{\mathbb{N}}}

\newcommand{\He}{\textrm{He}}

\newcommand{\eig}{\textrm{eig}}
\newcommand{\diag}{{\mathrm{Diag}}}
\newcommand{\row}{{\mathrm{Row}}}

\newcommand{\dom}{\mathop{\rm dom}\nolimits}

\usepackage{hyperref}

\usepackage{enumitem}
\usepackage{balance}

\definecolor{olivegreen}{rgb}{0.14,0.29,0}

\newif\ifitsdraft
\def\itsdraft{\global\itsdrafttrue}

\itsdraft  

\ifitsdraft

\definecolor{gray}{rgb}{0.33,0.4,0.47}

\definecolor{steelblue}{rgb}{0,.42,.7}

\definecolor{britishgreen}{rgb}{0,0.26,0.15}
\definecolor{navyblue}{rgb}{0,0,.8}
\definecolor{olivegreen}{rgb}{0.14,0.29,0}
\definecolor{myred}{rgb}{0.86,0.1,0.16}
    
\newcounter{al}    \newcounter{ss}

    \allowdisplaybreaks
    \pdfoutput = 1
    
     \usepackage{soul} 

\usepackage{rgsEnvironments}
\usepackage{hslTR}

\begin{document}
\title{Robust Hybrid Finite Time Parameter Estimation Without Persistence of Excitation}
\author{Adnane Saoud, Ryan  S. Johnson, and Ricardo G. Sanfelice, 
\thanks{A. Saoud is with the College of Computing, University Mohammed VI Polytechnic, Benguerir, Morocco {e-mail:adnane.saoud@um6p.ma}). Ryan S. Johnson and Ricardo  G.  Sanfelice are with the Dept. of Electrical and Computer Engineering, University of California,  Santa Cruz,  CA,  USA (e-mail:ricardo@ucsc.edu,rsjohnso@ucsc.edu). \\ Research by R. G. Sanfelice partially supported by NSF Grants no. CNS-2039054 and CNS-2111688, by AFOSR Grants nos. FA9550-23-1-0145, FA9550-23-1-0313, and FA9550-23-1-0678, by AFRL Grant nos. FA8651-22-1-0017 and FA8651-23-1-0004, by ARO Grant no. W911NF-20-1-0253, and by DoD Grant no. W911NF-23-1-0158.}}

\maketitle

\begin{abstract} In this paper, we consider the problem of estimating parameters of a linear regression model. Using a hybrid systems framework, a hybrid algorithm is proposed allowing the estimate to converge to the exact value of the unknown parameters in predetermined finite time. Interestingly, we show that for the case of constant parameters, the convergence property of the hybrid algorithm holds while only requiring the regressor to be exciting on a given interval. For the case of piecewise constant parameters, the classical persistency of excitation condition is required to guarantee the convergence. Robustness of the proposed algorithm with respect to measurements noise is analysed. Finally, illustrative examples are provided showing the merits of the proposed approach in terms of scalability and the applicability for the general class of time-varying unknown parameters.\end{abstract}


\section{INTRODUCTION}

Accurate estimation of model parameters of a system is critical in most applications. Different algorithms have been proposed in the adaptive control community to tackle this problem~\cite{narendra2012stable,tao2003adaptive,saoud2024hybrid}. In static linear regression models~\cite{narendra2012stable,tao2003adaptive}, the relationship between the regressors and the output is linear, {and the estimation of the parameters is generally based on the gradient descent algorithm~\cite{narendra2012stable,tao2003adaptive}, which requires a persistence of excitation (PE) condition~\cite{tao2003adaptive,narendra1987persistent} to ensure exponential convergence}. Regarding finite-time convergence, several results have been proposed recently including finite-time and fixed-time estimators ~\cite{wang2019robust,wang2019fixed}, estimation under bounded excitation intervals~\cite{chowdhary2013concurrent}, exponential convergence under the concept of a persistency of excitation of a given order~\cite{glushchenko2023unknown}, and the estimation of piecewise constant parameters~\cite{marino2022exponentially}.

Motivated by the results on finite time observers~\cite{engel2002continuous}, this paper presents a hybrid estimator with predetermined time for convergence of the estimates to unknown parameters. We first study the stability and finite time convergence properties of our hybrid estimator for the case of a constant unknown parameter. Interestingly, we show that the regressor needs to be exciting only over a bounded interval of time (given here as the time interval before the first jump). Then we show how this result can be generalized to piecewise constant unknown parameters, while precisely specifying the intervals on which the regressor needs to be exciting. Finally, we use tools developed in hybrid systems theory to provide robustness of the proposed estimator, where we analyse the convergence of the proposed hybrid estimator under measurement noises in terms of input to state stability. Indeed, we show that under a persistency of excitation type condition one can ensure the input to state stability of the estimator with respect to measurement noises, while a more relaxed finite time excitation condition makes it possible to establish an integral input to state stability property.

In spirit, our approach is closely related to the one in~\cite{hartman2012robust}. Both results provide a finite time estimator using a hybrid system framework. In comparison to our work, the approach in~\cite{hartman2012robust} is different in three directions. First, they are dealing with dynamical systems, while our work deals with an algebraic input-output model. Second, we are using a different estimation algorithm, based on the use of two coupled estimators. Finally, while the results in~\cite{hartman2012robust} rely on a persistency of excitation condition to ensure that their hybrid system is well defined and to guarantee completeness of solutions, we only need the regressor to be exciting on a finite time interval. {Several other methods achieve finite-time parameter convergence under relaxed excitation conditions. Memory-based methods achieve finite-time convergence under finite excitation by storing informative regressor data: memory-augmented identification~\cite{vahidi2021memory} and concurrent learning~\cite{tatari2021finite} guarantee convergence when stored samples satisfy a rank condition, avoiding persistent excitation requirements. A second family of methods exploits the Dynamic Regressor Extension and Mixing (DREM) framework introduced by Aranovskiy \emph{et al.}~\cite{aranovskiy2016performance}. DREM methods augment the regressor via dynamic operators and mix extended regressors to obtain decoupled scalar problems, enabling exponential (or finite-time) convergence under PE or interval PE~\cite{aranovskiy2016performance,ortega2020convergence}. Predefined-time variants of DREM  have been proposed to guarantee convergence under finite excitation~\cite{cui2024predefined}. When classical PE is not satisfied but the regressor still contains valuable directional information, \cite{marino2022exponentially} proposes a characterization of partial excitation ensuring exponentially convergent parameter estimation. Similar ideas have been explored in~\cite{aranovskiy2023drem,glushchenko2023relaxation} to address the lack of finite-time PE under the DREM framework. Finally, the approach in~\cite{ortega2021parameter} addressed the alertness problem in DREM-based estimators through a sliding window approach, which maintains tracking capability. However, DREM's transformation from a linear input-output model to $n$ scalar systems requires computing the determinant on inverse of matrices at $O(n^3)$ cost per timestep~\cite{aranovskiy2016performance,ortega2020convergence,wang2019robust}, which is a computational bottleneck inherent to all DREM-based approaches. Our hybrid estimator avoids this overhead by operating directly in $n$-dimensional parameter space with $O(n)$ cost of a simple gradient during flows and $O(n^3)$ operations only at sparse jumps for state transition matrix computation and inversion, making it practical for high-dimensional systems (please see the Section \ref{sec:6}). Alternative finite-time methods for parameter estimation includes the work in~\cite{krause2003parameter}, which achieves exponential (not finite-time) convergence requiring full trajectory history and the knowledge of \emph{a priori} bounds on the unknown parameters. The work in \cite{ortega2002line} achieves finite-time convergence via continuous singular value decomposition. The work in \cite{adetola2008finite}, which requires specific initialization and bounded parameters without reset mechanisms for re-convergence. Our hybrid approach achieves finite-time convergence without requiring \emph{a priori} parameter bounds, inherent tracking via periodic resets for varying unknown parameters, $O(n)$ flow cost with $O(n^3)$ operations only at sparse jumps, and formal robustness guarantees.}

A preliminary version of this work has been presented in the conference paper~\cite{saoud2021robust}. In the current version, we present a detailed description of the results in~\cite{saoud2021robust} including all the proofs. Moreover, while in~\cite{saoud2021robust} we have shown the input state stability of the perturbed hybrid estimator under a persistency of excitation condition, we first improve the result in \cite{saoud2021robust} by providing tighter input to state stability bounds. Moreover, we also show in this paper that the perturbed hybrid estimator is integral input to state stable if the regressor is exciting only on a finite time interval.

The remainder of this paper is organized as follows. Section~\ref{sec:2} introduces the required preliminaries. In Section~\ref{sec:4}, we show the finite time convergence of the proposed estimator with respect to constant and piecewise constant unknown parameters. In Section~\ref{sec:5}, the robustness of the proposed hybrid estimator to generic noise is investigated. Finally, Section~\ref{sec:6} presents numerical results validating and highlighting the robustness and scalability properties of the proposed estimator.
\section{Preliminaries}
\label{sec:2}
\textbf{Notations:} The symbol $|.|$ denotes the Euclidean norm. Given a matrix $A$, $\eig(A)$ denotes its eigenvalues, $\overline{\mu}(A):=\max\{\lambda/2 : \lambda \in \eig(A+A^T)\}$ and $\underline{\mu}(A):=\min\{\lambda/2 : \lambda \in \eig(A+A^T)\}$. Given a set-valued map $F:\R^n\rightrightarrows \R^m$, its domain is given by $\dom F =\{x\in \R^n : F(x)\neq \emptyset\}$. A continuous function $\alpha$ is said to belong to class $\mathcal{K}_{\infty}$ if it is strictly increasing, $\alpha(0)=0$ and $\alpha(r)$ goes to infinity as $r$ tends to infinity. Given $y\in \R^n$ and a non-empty set $\mathcal{A} \subseteq \R^n$, $|y|_{\mathcal{A}}:= \inf\limits_{x\in \mathcal{A}}|x-y|$. Given matrices $A$ and $B$, $\He(A,B)=A^{\top}B + BA$. Given vectors $A_1,A_2,\ldots,A_n$, $\row(A_1,A_2,\ldots,A_n) \in \mathbb{R}^{nM}$ denotes the vector consisting of the concatenation of the vectors $A_i$, $i=1,2,\ldots,n$. 
\subsection{Linear Regression}
A general linear regression model is given by
 \begin{equation}
 \label{eqn:estimator}
     y(t)=\theta^{*\top}\phi(t)
 \end{equation}
 where $y(t)\in \R$ is the known output, $\theta^* \in \R^n$ is the unknown parameter, and $\phi(t) \in \R^n$ is the known regressor. For the estimator $\hat{y}(t)=\theta^{\top}(t)\phi(t)$, the parameter estimation error is given by $\Tilde{\theta}(t):=\theta(t)-\theta^*$.  The error between the estimated and actual outputs is given by $e(t):=\hat{y}(t)-y(t)=\Tilde{\theta}^{\top}(t)\phi(t)$. The classical gradient descent algorithm~\cite{narendra2012stable,tao2003adaptive} gives $\dot{\theta}=-\gamma\phi(\phi^{\top}\theta-y)$ with $\gamma>0$. It can be shown using Barbalat's Lemma~\cite{tao2003adaptive} that using the gradient descent algorithm $e(t)$ converges to $0$. However, the exponential convergence of $\theta$ to $\theta^*$ requires persistent excitation~\cite{tao2003adaptive,narendra1987persistent} (Definition~\ref{def:persistency2}).

Our objective is to construct a hybrid parameter estimator allowing for $\theta$ to converge to $\theta^*$ in finite time, while only requiring the excitation of the regressor $\phi$ during a finite time interval. For this reason, some preliminary tools on hybrid systems~\cite{goebel2009hybrid} will be introduced in the next section.
\subsection{Preliminaries on Hybrid Systems}
In this paper, a hybrid system $\mathcal{H}$ has data $(C,f,D,g)$ and is defined by
\begin{equation} \label{eq1}
\begin{split}
\dot{z} & = f(z,u)~~z \in C, \\
 z^+ & =  g(z,u)~~ z\in D,
\end{split}
\end{equation}
where $z \in \R^q$ is the state, $u \in \R^p$ is the input, $f$ is the flow map which captures the continuous dynamics, and $C$ defines the flow set on which $f$ is effective. The
map $g$ defines the jump map and models the discrete
behavior, while $D$ defines the jump set, from which discrete dynamics are allowed. Given an input $u$, a solution to $\mathcal{H}$ is given by the pair $(z,u)$, which is parametrized by $(t,j) \in \R_{\geq0 }\times \N$, where $t$ is the ordinary time keeping track of the flows, and $j$ is the jump index counting the number of jumps (when the system has no input or its input is zero, its solution will be given by $z$). The domain $\dom z= \dom u \subset \R_{\geq 0}\times \N$ of a solution $(z,u)$ to $\mathcal{H}$ is a hybrid time domain, in the sense
that for every $(T,J) \in \dom z$, there exists a nondecreasing sequence $\{t_j\}_{j=0}^{J+1}$ with $t_0=0$ such that $\dom z \cap ([0,T]\times \{0,1,\dots,J\})= \mathop{\bigcup}\limits_{j=0}^J([t_j,t_{j+1}]\times \{j\})$. Given the sequence above, for any $j \in \{1,2,\ldots,J\}$, $t_j$ is the
ordinary time of the $j$-th jump of $z$. A solution to $\mathcal{H}$ is called maximal if it cannot be extended, and it is called complete if its domain is unbounded. For more details on hybrid systems, we refer the interested reader to \cite{sanfelice2021hybrid}. In the remainder of the paper, we focus on hybrid systems without inputs. Hybrid systems with inputs will be considered in Section \ref{sec:5} to analyze the robustness of the proposed hybrid estimator.

\section{Hybrid finite time convergent parameter estimation algorithm}
\label{sec:4}
{This section presents a hybrid parameter estimator for finite-time convergence with respect to constant and piecewise constant unknown parameter $\theta^*$ for the system in (\ref{eqn:estimator}).}
\subsection{Excitation Conditions}

We start by recalling from~\cite{tao2003adaptive} the notion of excitation for the regressor signal.
\begin{definition}
\label{def:persistency1}
Given $\sigma \geq 0$ and $\mu >0$, a signal $t \mapsto \phi(t) \in \R^n$ is exciting over the finite interval $[\sigma, \sigma+\mu]$ if there exists $\eta>0$ such that
\begin{equation}
\int_{\sigma}^{\sigma+\mu}\phi(t)\phi^{\top}(t)dt\geq \eta I.
\end{equation}
\end{definition}

\begin{definition}
\label{def:persistency2}
A signal $t \mapsto \phi(t) \in \R^n$ is $\mu$-persistently exciting if there exist $\mu>0$ and $\eta>0$ such that for all $\sigma \geq 0$
\begin{equation}
    \int_{\sigma}^{\sigma+\mu}\phi(t)\phi^{\top}(t)dt\geq \eta I.
\end{equation}
In addition, a signal $t \mapsto \phi(t) \in \R^n$ is said to be persistently exciting if there exists $\mu>0$ such that $t \mapsto \phi(t) \in \R^n$ is $\mu$-persistently exciting.
\end{definition}
Let us mention that while in Definition~\ref{def:persistency1} the signal $t \mapsto \phi(t)$ is only exciting over the finite interval $[\sigma,\sigma+\mu]$ for a given $\sigma \geq 0$, the $\mu$-persistency of excitation condition in Definition~\ref{def:persistency2} requires the signal  $t \mapsto \phi(t)$ to be exciting over any interval of length $\mu>0$. {The persistency of excitation condition is necessary and sufficient to ensure exponential convergence using the classical gradient descent algorithm~(see \cite{tao2003adaptive} and the references therein). In this paper, we leverage the relaxed excitation condition introduced in Definition \ref{def:persistency1} to design a hybrid finite-time parameter estimation algorithm.

\subsection{Constant Unknown Parameter}
\label{subsection:constant}

The finite time adaptation law to estimate the unknown parameter $\theta^*$ for the linear regression model in (\ref{eqn:estimator}) is formalized as a hybrid system $\mathcal{H}$ and defined as follows:
\begin{equation}
\label{eqn:model2}
\begin{split}
\dot{z} & = f(z)~~z \in C, \\
 z^+ & =  g(z)~~ z\in D,
\end{split}
\end{equation}
with state $z=({\theta}_1,{\theta}_2, \tau_a, \tau_b ,q) \in \mathcal{X}=\R^n\times \R^n\times [0,\delta]\times \R_{\geq 0} \times \N$. The signals $\phi: t \mapsto \phi(t) \in \R^n$ and $y : t \mapsto y(t)\in \R$ are known and used in the hybrid model. The state components $\theta_1$ and $\theta_2$ represent the estimates with update laws based on the positive adaptation rates $\gamma_1$ and $\gamma_2$, respectively. The state component $\tau_a$ is a timer used to trigger the jump at $\delta>0$. The state components $\tau_b$ and $q$ make it possible to convert the time-varying system into a time invariant one. The flow and jump maps are given by

\small

\begin{equation}
f(z) := \begin{bmatrix}
    -\gamma_1\phi(\tau_b)(\phi^{\top}(\tau_b)\theta_1 - y(\tau_b))\\
    -\gamma_2\phi(\tau_b)(\phi^{\top}(\tau_b)\theta_2 - y(\tau_b))\\
    1\\1\\0
  \end{bmatrix},
g(z) := \begin{bmatrix}
    R({\theta}_1,\theta_2,q)\\
    R({\theta}_1,\theta_2,q)\\
    0\\\tau_b\\1+q
\end{bmatrix},
\label{eqn:fg}
\end{equation}
\normalsize
where $R({\theta}_1,{\theta}_2,q):=K_1(q){\theta}_1+K_2(q){\theta}_2$ and the gains $K_1$ and $K_2$ are given by the functionals
\begin{align}
    K_2(q)&:=I-K_1(q) \nonumber &\text{$\forall$} q \in \N \\
    K_1(q)&:=-\Phi_2(\delta,0) \left(\Phi_1(\delta,0) -\Phi_2(\delta,0)\right)^{-1} &\text{ if } q=0 \nonumber\\
     K_1(q)&:=I &\text{ if } q\neq 0\label{eqn:functionals}
\end{align} where $\Phi_i$ is the state-transition matrix of the time-varying system
$\dot{\tilde{\theta}}_i=-\gamma_i\phi(t)\phi^{\top}(t)\Tilde{\theta}_i$.
The flow set is defined as $C:=\mathcal{X}$ and the jump set is given by $D:=\{z \in \mathcal{X}:\tau_a=\delta\}$. The construction of $C$ and $D$ ensures that the system jumps periodically. The logic variable $q$ allows also to ensure that the functional $K_1$ is the identity whenever $q > 0$. For this algorithm, the update based on the term $-\Phi_2(\delta,0) \left(\Phi_1(\delta,0) -\Phi_2(\delta,0)\right)^{-1}$ occurs only one time, when $q=0$.


Let us now provide conditions for the functionals $K_1$ and $K_2$ to be well defined. 

\begin{proposition}
\label{pro:invertibility}
Suppose that $\gamma_1\neq \gamma_2$, the regressor $t \mapsto \phi(t)$ is exciting over $[0,\delta]$ (i.e, $\int_{0}^{\delta}\phi(s)\phi^{\top}(s)ds\geq \eta I$ for $\eta>0$), there exists $\phi_M \geq 0$ such that $|\phi(s)| \leq \phi_M$ for all $s\in [0,\delta]$, and the constants $\delta,\gamma_1$ and $\gamma_2$ satisfy the following inequalities:
\begin{equation}
\label{eqn:inequ1}
    0<\phi_M^2\gamma_2\delta<1
\end{equation}
\begin{align}
\label{eqn:inequ2}
    0<\left(1-\frac{2\eta\gamma_1}{(1+\phi_M^2\gamma_1\delta)^2}\right)\left(1+\frac{2\gamma_2\phi_M^2\delta}{(1-\phi_M^2\gamma_2\delta)^2}\right) < 1
\end{align}
Then, the gains $K_1$ and $K_2$ in (\ref{eqn:functionals}) are well defined at hybrid time $(\delta,0)$, i.e, $\Phi_1(\delta,0) -\Phi_2(\delta,0)$ is invertible.
\end{proposition}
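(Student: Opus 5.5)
The plan is to write $\Phi_1(\delta,0)-\Phi_2(\delta,0)=\Phi_2(\delta,0)\big(\Phi_2(\delta,0)^{-1}\Phi_1(\delta,0)-I\big)$. Since $\Phi_2(\delta,0)$ is a state-transition matrix, it is always invertible. It therefore suffices to show that $|\Phi_2(\delta,0)^{-1}\Phi_1(\delta,0)|<1$ in the induced Euclidean norm, because then $I-\Phi_2^{-1}\Phi_1$ is invertible by a Neumann series argument. Using submultiplicativity, we aim to prove
\[
|\Phi_1(\delta,0)|^2\,|\Phi_2(\delta,0)^{-1}|^2<1,
\]
and to identify the two factors with the two brackets in (\ref{eqn:inequ2}).

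\textbf{Step 1: the contraction bound for $\Phi_1$.} Each column of $\Phi_1(t,0)$ solves $\dot x=-\gamma_1\phi\phi^\top x$. Consequently $|x(t)|$ is nonincreasing, since $\frac{d}{dt}|x|^2=-2\gamma_1(\phi^\top x)^2$. Integrating gives
\[
|x(\delta)|^2=|x_0|^2-2\gamma_1\int_0^\delta(\phi^\top x)^2ds .
\]
To bound the integral from below, compare $x(s)$ with $x_0$. We have $|x(s)-x_0|\le\gamma_1\int_0^s|\phi|^2|x|\le\gamma_1\phi_M^2 s|x_0|$, so $\phi^\top x(s)$ is close to $\phi^\top x_0$. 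A standard route, as in the classical PE lemma, is to use the excitation $\int_0^\delta(\phi^\top x_0)^2\ge\eta|x_0|^2$ together with a Cauchy--Schwarz estimate on the perturbation. This should yield
\[
\int_0^\delta(\phi^\top x)^2\ge \frac{\eta}{(1+\phi_M^2\gamma_1\delta)^2}|x_0|^2 .
\]
The factor $(1+\gamma_1\phi_M^2\delta)^{-2}$ is exactly the form appearing in the standard proof, in \cite{tao2003adaptive}, that PE implies uniform exponential stability. It follows that $|\Phi_1(\delta,0)|^2\le 1-\frac{2\eta\gamma_1}{(1+\phi_M^2\gamma_1\delta)^2}$, which is the first bracket, and it is positive by (\ref{eqn:inequ2}).

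\textbf{Step 2: the bound on $\Phi_2^{-1}$.} We have $\Phi_2(\delta,0)^{-1}=\Phi_2(0,\delta)$, which propagates the system backward. Equivalently, $w(s)=\Phi_2(\delta-s,\delta)w_0$ satisfies $\dot w=+\gamma_2\phi\phi^\top w$ in reversed time, so
\[
|w(\delta)|^2=|w_0|^2+2\gamma_2\int(\phi^\top w)^2 .
\]
For an upper bound, use $|w(s)|\le|w_0|+\gamma_2\phi_M^2\int_0^s|w|$. Crudely, on $[0,\delta]$ this gives $|w|\le|w_0|/(1-\gamma_2\phi_M^2\delta)$, and this step is where (\ref{eqn:inequ1}) is used. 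Substituting into the integral gives
\[
|w(\delta)|^2\le\Big(1+\frac{2\gamma_2\phi_M^2\delta}{(1-\phi_M^2\gamma_2\delta)^2}\Big)|w_0|^2,
\]
which is the second bracket.

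Multiplying the two bounds and invoking (\ref{eqn:inequ2}) gives $|\Phi_2^{-1}\Phi_1|<1$, and hence $\Phi_1(\delta,0)-\Phi_2(\delta,0)$ is invertible. This shows $K_1(0)$, and therefore $K_2(0)$, is well defined at $(\delta,0)$. The assumption $\gamma_1\neq\gamma_2$ is implicitly needed for (\ref{eqn:inequ2}) to be satisfiable, since with $\gamma_1=\gamma_2$ the two transition matrices coincide.

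\textbf{Main obstacle.} The hardest part is Step 1: obtaining exactly the lower bound $\eta/(1+\phi_M^2\gamma_1\delta)^2$ on $\int_0^\delta(\phi^\top x)^2$. I would first try writing $\phi^\top x_0=\phi^\top x+\phi^\top(x_0-x)$ and bounding $\big(\int(\phi^\top x_0)^2\big)^{1/2}\le\big(\int(\phi^\top x)^2\big)^{1/2}+\big(\int(\phi^\top(x_0-x))^2\big)^{1/2}$. For the second term, use $x_0-x(s)=\gamma_1\int_0^s\phi\phi^\top x$ and Cauchy--Schwarz to bound it by $\gamma_1\phi_M^2\delta\,\big(\int(\phi^\top x)^2\big)^{1/2}$. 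Rearranging yields
\[
\sqrt{\eta}\,|x_0|\le(1+\gamma_1\phi_M^2\delta)\Big(\int(\phi^\top x)^2\Big)^{1/2},
\]
which is exactly the desired estimate.
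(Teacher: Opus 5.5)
Your proof is correct and follows the paper's argument. You factor out the invertible $\Phi_2(\delta,0)$, bound $|\Phi_1(\delta,0)|$ and $|\Phi_2^{-1}(\delta,0)|$ by the square roots of the two brackets in (\ref{eqn:inequ2}), and conclude with the Neumann-series lemma (Lemma~\ref{lem}). The paper cites these two bounds from Lemma~\ref{lem21}, which your Steps 1 and 2 re-derive with identical constants: Minkowski replaces the optimized weighted Young inequality, and a sup/Gronwall bound on the backward flow replaces the $\rho_2$ argument. The one step to tighten is the ``crude'' bound in Step 2. Set $M=\max_{s\in[0,\delta]}|w(s)|$; then $M\le|w_0|+\gamma_2\phi_M^2\delta M$, and (\ref{eqn:inequ1}) gives $M\le|w_0|/(1-\gamma_2\phi_M^2\delta)$.
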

\begin{proof}
    From the invertibility of the state transition matrix $\Phi_2$ and from Lemma~\ref{lem21} in Appendix \ref{sec:App}, conditions (\ref{eqn:inequ1}) and (\ref{eqn:inequ2}) of Proposition~\ref{pro:invertibility} imply that $|\Phi_1(\delta,0)\Phi_2^{-1}(\delta,0)| \leq |\Phi_1(\delta,0)||\Phi_2^{-1}(\delta,0)| <1$. Moreover, from the invertibility of the state transition matrices $\Phi_1$ and $\Phi_2$ we have that 
    \begin{align*}
        K_1&=-\Phi_2(\delta,0)(\Phi_1(\delta,0)-\Phi_2(\delta,0))^{-1}\\&=-\Phi_2(\delta,0)\Phi_2(\delta,0)^{-1}(\Phi_1(\delta,0)\Phi_2^{-1}(\delta,0)-I)^{-1}\\&=I-\Phi_1(\delta,0)\Phi_2^{-1}(\delta,0).
    \end{align*}
    Hence, in view of Lemma~\ref{lem} in Appendix \ref{sec:App} the matrix $K_1=(I-\Phi_1(\delta,0)\Phi_2^{-1}(\delta,0))$ is invertible and the functionals $K_1$ and $K_2$ are well defined.
\end{proof}
\begin{remark}
The function defined by $$\delta \mapsto \left(1-\frac{2\eta\gamma_1}{(1+\phi_M^2\gamma_1\delta)^2}\right)\left(1+\frac{2\gamma_2\phi_M^2\delta}{(1-\phi_M^2\gamma_2\delta)^2}\right)$$ evaluated at $\delta=0$ gives $1-2\eta\gamma_1$. Hence, conditions (\ref{eqn:inequ1})-(\ref{eqn:inequ2}) are always satisfied by appropriately choosing the parameters $\delta$, $\gamma_1$ and $\gamma_2$ for given $\eta> 0$ and $\phi_M >0$.
\end{remark}
Next, we show convergence of the parameter estimates ${\theta}_i$, $i \in \{1,2\}$, to $\theta^*$. To this end, we define the following set:
\begin{equation}
    \label{eqn:stableset}
    \mathcal{A}=\{z\in \mathcal{X} : {\theta}_1={\theta}_2=\theta^*\}
\end{equation}
\begin{theorem}
\label{prop:3}
Let $\delta>0$, $ \gamma_1, \gamma_2 >0$ and a regressor $t \mapsto \phi(t)\in \R^n$ be given. Consider the hybrid system $\mathcal{H}$ in (\ref{eqn:model2}). Assume the unknown parameter $\theta^*\in \R^n$ is constant. If the conditions in Proposition~\ref{pro:invertibility} hold, then there exists $\alpha \in \mathcal{K}_{\infty}$ such that for each maximal solution $z$ to $\mathcal{H}$ from $z(0,0) \in \mathcal{X}_0:=\{z \in \mathcal{X}: \theta_1=\theta_2, \tau_a=\tau_b=q=0\}$, we have that $|z(t,j)|_{\mathcal{A}}\leq \alpha( |z(0,0)|_{\mathcal{A}})$ for all $(t,j)\in \dom z$, and ${\theta}_1(t,j)={\theta}_2(t,j)=\theta^*$ for all $(t,j) \in \dom z$ satisfying $t \geq \delta$ and $j\geq 1$.
\end{theorem}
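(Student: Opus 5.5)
The plan is to make the hybrid time domain explicit and then work in error coordinates. Since $C=\mathcal{X}$, $D=\{\tau_a=\delta\}$ and $\dot\tau_a=1$, every maximal solution from $\mathcal{X}_0$ flows on $[0,\delta]$ with $j=0$, jumps at $(\delta,0)$, and afterwards jumps every $\delta$ units of ordinary time. Throughout, $\tau_b(t,j)=t$ and $q(t,j)=j$. Completeness follows because the flow map is affine in $(\theta_1,\theta_2)$ with coefficients continuous in $\tau_b$, assuming $\phi,y$ are regular enough, so there is no finite escape. In particular $\dom z=\bigcup_j [j\delta,(j+1)\delta]\times\{j\}$.

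Set $\tilde\theta_i:=\theta_i-\theta^*$. Since $y=\theta^{*\top}\phi$, the flows give $\dot{\tilde\theta}_i=-\gamma_i\phi(t)\phi^\top(t)\tilde\theta_i$, and hence $\tilde\theta_i(t,j)=\Phi_i(t,j\delta)\tilde\theta_i(j\delta,j)$. Because $z(0,0)\in\mathcal{X}_0$, we have $\tilde\theta_1(0,0)=\tilde\theta_2(0,0)=:\tilde\theta_0$, and $|z(0,0)|_{\mathcal{A}}=\sqrt2|\tilde\theta_0|$.

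The key step is the first jump. Proposition~\ref{pro:invertibility} gives that $K_1(0)$ is well defined. Since $K_1+K_2=I$, the reset satisfies
\[
R(\theta_1,\theta_2,0)-\theta^*=K_1(0)\tilde\theta_1+K_2(0)\tilde\theta_2 = \bigl(K_1(0)\Phi_1(\delta,0)+(I-K_1(0))\Phi_2(\delta,0)\bigr)\tilde\theta_0 .
\]
Write $\Phi_i=\Phi_i(\delta,0)$. The bracket equals $K_1(0)(\Phi_1-\Phi_2)+\Phi_2=-\Phi_2+\Phi_2=0$ by the definition of $K_1(0)$. So after the first jump $\tilde\theta_1(\delta,1)=\tilde\theta_2(\delta,1)=0$. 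For $j\ge1$ the flows preserve $\tilde\theta_i=0$, since it is an equilibrium of the linear error dynamics. The later jumps use $K_1=I$, $K_2=0$, so $R=\theta_1=\theta^*$. By induction $\theta_1=\theta_2=\theta^*$ for all $(t,j)\in\dom z$ with $j\ge1$, which also covers $t\ge\delta$.

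For the bound, consider first $j=0$ and $t\in[0,\delta]$. Here $|\tilde\theta_i(t,0)|\le|\Phi_i(t,0)||\tilde\theta_0|\le|\tilde\theta_0|$, because $\frac{d}{dt}|\tilde\theta_i|^2=-2\gamma_i(\phi^\top\tilde\theta_i)^2\le0$. For $j\ge1$ the distance is zero. Therefore $|z(t,j)|_{\mathcal{A}}\le|z(0,0)|_{\mathcal{A}}$, and one may take $\alpha(s)=s$, or any slightly larger $\mathcal{K}_\infty$ function for safety.

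The only delicate point is the algebraic cancellation at the first jump, together with the well-posedness of $K_1(0)$. Both are handled by Proposition~\ref{pro:invertibility} and the identity above. One should also note that $\Phi_i$ are evaluated on the same interval $[0,\delta]$ on which the flow occurs. This holds because $\tau_b=t$ starts at $0$, which is exactly why $\mathcal{X}_0$ requires $\tau_b=0$ and $\theta_1=\theta_2$.
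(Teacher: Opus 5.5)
Your proof is correct. The finite-time part matches the paper's argument: the domain structure with $\tau_b=t$ and $q=j$, the cancellation $K_1(0)\bigl(\Phi_1(\delta,0)-\Phi_2(\delta,0)\bigr)+\Phi_2(\delta,0)=0$ using $\tilde\theta_1(0,0)=\tilde\theta_2(0,0)$, and $K_1=I$ for $q\ge 1$. The difference is in how you prove the bound $|z(t,j)|_{\mathcal{A}}\le\alpha(|z(0,0)|_{\mathcal{A}})$.

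The paper takes $V=\tilde\theta_1^\top P_1\tilde\theta_1+\tilde\theta_2^\top P_2\tilde\theta_2$ with arbitrary symmetric positive definite $P_i$. It sandwiches $V$ between $\underline{\alpha}_1|z|_{\mathcal{A}}^2$ and $\overline{\alpha}_2|z|_{\mathcal{A}}^2$, argues that $V$ is nonincreasing along flows and at jumps, and obtains $\alpha=\alpha_1^{-1}\circ\alpha_2$. You instead bound the explicit solution using $|\Phi_i(t,0)|\le 1$ on $[0,\delta]$, together with the fact that the distance is identically zero after the first jump. This gives the sharper $\alpha(s)=s$.

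Your route is more elementary and also more reliable. The paper's flow inequality needs $\phi\phi^\top P_i+P_i\phi\phi^\top\succeq 0$ for every $\phi$. That fails for a general positive definite $P_i$: for $P_i$ diagonal with entries $a\neq b$ and $\phi=(1,1)^\top$, this matrix has determinant $-(a-b)^2<0$. So the Lyapunov argument only works when $P_i$ is a multiple of the identity, which is exactly your computation $\frac{d}{dt}|\tilde\theta_i|^2=-2\gamma_i(\phi^\top\tilde\theta_i)^2\le 0$. What the paper's formulation offers is the standard hybrid Lyapunov template (sandwich bounds on $C\cup D\cup g(D)$, decrease during flows and across jumps). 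That template is the natural tool when explicit solution formulas are unavailable.

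Two minor tightenings. First, to show $\tilde\theta_i\equiv 0$ is preserved during flows for $j\ge 1$, use the same monotonicity of $|\tilde\theta_i|$ rather than uniqueness at an equilibrium, since the hypotheses only bound $\phi$ on $[0,\delta]$. Second, completeness (and your regularity assumption on $\phi,y$) is not needed, because both conclusions are quantified over $\dom z$.
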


\begin{proof} First let us show the finite time convergence of $\theta_i$ to $\theta^*$, $i\in \{1,2\}$. To do so, we will analyse the evolution of the parameters errors along a solution. From $\mathcal{X}_0$, the evolution of the parameters error before the first jump, i.e, $t \in [0,\delta]$ and $j=0$, is given by
\begin{equation}
\label{eqn:solu22}
\tilde{\theta}_i(t,0)=\Phi_i(t,0)\tilde{\theta}_i(0,0) \text{ for all } i\in\{1,2\}
\end{equation}
Then, for the solution to be maximal, a jump occurs at $(t,j)=(\delta,0)$. After the first jump and from the expressions of $K_1$ and $K_2$ in (\ref{eqn:functionals}), which are well defined at jumps according to Proposition~\ref{pro:invertibility}, we have
    \begin{align}
    \Tilde{\theta}_i(\delta,1)&=
     R({\theta}_1(\delta,0),{\theta}_2(\delta,0),q(\delta,0))-\theta^* \nonumber \\
    &=K_1(q(\delta,0)) \left( \Phi_1(\delta,0)-\Phi_2(\delta,0) \right) \Tilde{\theta}_2(0,0)+\Tilde{\theta}_2(\delta,0)    \nonumber   \\  \label{eqn:pro}
    &=\Phi_2(\delta,0)(-\Tilde{\theta}_2(0,0)+\Tilde{\theta}_2(0,0))=0.
\end{align}
The second equality comes from (\ref{eqn:solu22}) and the fact that $\Tilde{\theta}_1(0,0)=\Tilde{\theta}_2(0,0)$.
Then, we have $\Tilde{\theta}_i(\delta,1)=0$, which implies that ${\theta}_i(\delta,1)=\theta^*$. Hence, we have the finite time convergence of $\theta_i$, $i\in \{1,2\}$, to $\theta^*$.

To show the boundedness of the solutions, we consider the following Lyapunov function:
$V(z)=\tilde{\theta}^{\top}_1P_1\tilde{\theta}_1+\tilde{\theta}^{\top}_2P_2\tilde{\theta}_2$, where $P_1$ and $P_2$ are symmetric positive definite matrices. It can be shown that $V$ satisfies the following property:
$$\alpha_1(|z|_{\mathcal{A}})\leq V(z) \leq \alpha_2(|z|_{\mathcal{A}}) ~~~~~ \forall z \in C\cup D\cup g(D),$$
where $\alpha_1(r)=\underline{\alpha}_1r^2$ with $\underline{\alpha}_1=\min(\underline{\mu}(P_1),\underline{\mu}(P_2))$
and $\alpha_2(r)=\overline{\alpha}_2r^2$ with $\overline{\alpha}_2=\max(\overline{\mu}(P_1),\overline{\mu}(P_2))$.
Now for each $z\in C$, we have
\begin{align*}
\langle\nabla V(z),f(z)\rangle &= \dot{\Tilde{\theta}}^{\top}_1P_1\Tilde{\theta}_1+ \Tilde{\theta}^{\top}_1P_1\dot{{\Tilde{\theta}}}_1+ \dot{\Tilde{\theta}}^{\top}_2P_2\Tilde{\theta}_2+ \Tilde{\theta}^{\top}_2P_2\dot{{\Tilde{\theta}}}_2 \\ &= \begin{bmatrix}\Tilde{\theta}_1^{\top} & \Tilde{\theta}_2^{\top}\end{bmatrix} Q(\tau_b) \begin{bmatrix}
\Tilde{\theta}_1 \\ \Tilde{\theta}_2
\end{bmatrix} 
\end{align*}with{\small $$Q(\tau_b)=\begin{bmatrix}
-\gamma_1\He(\phi(\tau_b)\phi^{\top}(\tau_b),P_1) & 0 \\
0 & -\gamma_2\He(\phi(\tau_b)\phi^{\top}(\tau_b),P_2)
\end{bmatrix}$$} The matrix $Q$ is negative semidefinite for all $\tau_b \in \R_{\geq 0}$. Hence, $\langle\nabla V(z),f(z)\rangle \leq 0$ for all $z \in C$.
Let us now consider the variations of $V$ at jump instants. Let $z$ be a maximal solution for the hybrid system $\mathcal{H}$ with $z(0,0) \in \mathcal{X}_0$, and let $(t,j)$ be such that $(t,j+1) \in \dom z$. If $j=0$, it follows directly from (\ref{eqn:pro}) that
$V(z(t,j+1))-V(z(t,j))=-V(z(t,j)) \leq -\alpha_1(|z(t,j)|_{\mathcal{A}})\leq 0$. Similarly, if $j\in \mathbb{N}_{\geq 1}$, it follows from (\ref{eqn:functionals}) that $V(z(t,j+1))-V(z(t,j)) \leq -\alpha_1(|z(t,j)|_{\mathcal{A}})\leq 0$. Hence, we have that $\alpha_1(|z(t,j)|_{\mathcal{A}})\leq V(z(t,j)) \leq V(z(0,0)) \leq \alpha_2(|z(0,0)|_{\mathcal{A}})$, which implies that $|z(t,j)|_{\mathcal{A}}\leq \alpha_1^{-1}\circ \alpha_2( |z(0,0)|_{\mathcal{A}})$ for all $(t,j)\in \dom z$. By defining $\alpha=\alpha_1^{-1}\circ \alpha_2$ which belongs to the class $\mathcal{K}_{\infty}$, it follows that $|z(t,j)|_{\mathcal{A}}\leq \alpha( |z(0,0)|_{\mathcal{A}})$ for all $(t,j)\in \dom z$.  
\end{proof}

\begin{remark}
    {Let us clarify the intuition behind the choice of the functionals $K_1$ and $K_2$. At hybrid time $(\delta, 0)$, we have:
\begin{align*}
\tilde{\theta}_i(\delta, 0) &= \Phi_i(\delta, 0) \tilde{\theta}_i(0,0), i=1,2
\end{align*}
Since $\tilde{\theta}_1(0,0) = \tilde{\theta}_2(0,0) = \tilde{\theta}_0$ (from the initial condition set $\mathcal{X}_0$), we want to find $K_1, K_2$ such that 
$\tilde{\theta}_i(\delta, 1) = K_1\theta_1(\delta, 0) + K_2\theta_2(\delta, 0) - \theta^* = 0$, with $K_2 = I - K_1$, this implies that $K_1(\Phi_1 - \Phi_2)\tilde{\theta}_0 + \Phi_2\tilde{\theta}_0 = 0$
For this to hold for all possible initializations $\tilde{\theta}_0$, we choose $K_1 = -\Phi_2(\Phi_1 - \Phi_2)^{-1}$.}
\end{remark}

\begin{remark}
The choice of initial conditions of the estimate is not critical. Indeed, we can choose different initial values $\theta_1(0,0)\neq \theta_2(0,0)$. In this case, one can only ensure the convergence of the estimates $\theta_1$ and $\theta_2$ to $\theta^*$ at the second jump, i.e, $(t,j)=(2\delta,2)$. In this case, the new expressions of the gains $K_1$ and $K_2$ are given by
\begin{align*}
    K_2(q)&=I-K_1(q) \nonumber &\text{$\forall$} q \in \N \\
    K_1(q)&=-\Phi_2(2\delta,\delta) \left(\Phi_1(2\delta,\delta) -\Phi_2(2\delta,\delta)\right)^{-1} &\text{ if } q=1 \nonumber\\
     K_1(q)&=I &\text{ if } q\neq 1
\end{align*}
which are well defined at each jump of the hybrid system $\mathcal{H}$ in (\ref{eqn:model2}), if $t \mapsto  \phi(t)$ is exciting and bounded over the time interval $[\delta,2\delta]$ and conditions (\ref{eqn:inequ1})-(\ref{eqn:inequ2}) are satisfied.
\end{remark}

\smallskip

\begin{remark}
In general, the initial condition on the timer ($\tau_a(0,0)=0$) can be removed at the cost of additional requirements on the regressor $\phi$. Indeed, if the initial condition on the timer can be freely chosen, i.e, ($\tau_a(0,0)\in [0,\delta])$, and in view of Proposition \ref{pro:invertibility}, the gains $K_1$ and $K_2$ are well defined if there exists an interval $[0, \delta-s]$ for some $s \in [0, \delta)$ such that the regressor $t \mapsto \phi(t)$ is exciting and bounded and conditions (\ref{eqn:inequ1})-(\ref{eqn:inequ2}) are satisfied. Let us mention that this new excitation requirement on the regressor is relatively close to condition ($10$) in~\cite{wang2019robust}, allowing to ensure finite time convergence of the parameter error. In this case, the new expressions of the gains $K_1$ and $K_2$ will be given by
\begin{align*}
K_2(q)  =& I-K_1(q)   &\forall q \in \N   \\K_1(q)  =& -\Phi_2(\delta-\tau_a(0,0),0)(\Phi_1(\delta-\tau_a(0,0),0)\\ &\qquad-\Phi_2(\delta-\tau_a(0,0),0))^{-1} &\text{ if } q=0 \\
K_1(q) =& I &\text{ if } q\neq 0.
\\
\end{align*}
\end{remark}

\subsection{Piecewise Constant Unknown Parameter}

When the unknown parameter $\theta^*$ is a piecewise constant function, it is also possible to estimate it in finite time. However, one jump is not enough. Therefore, recursive jumps are embedded in the parameter estimator. For this purpose, we consider the hybrid finite time convergent adaptation law in (\ref{eqn:model2}), with state $z=({\theta}_1,{\theta}_2, \tau_a,\tau_b,q) \in \mathcal{X}=\R^n\times \R^n\times [0,\delta]  \times \R_{\geq 0}  \times \N$, with data $f$, $g$, $C$, $D$ as in (\ref{eqn:model2}), but with 

\small
\begin{align}
\label{eqn:functionalss}
    K_1(q)&=-\Phi_2((q+1)\delta,q\delta)\left(\Phi_1((q+1)\delta,q\delta) -\Phi_2((q+1)\delta,q\delta)\right)^{-1} \nonumber\\
    K_2(q)&=I-K_1(q)
\end{align}

\normalsize
As in the previous section, for $K_1$ and $K_2$ to be well defined at jumps, we have the following result for which the proof follows the same lines as the proof of Proposition~\ref{pro:invertibility}.
\begin{proposition}
\label{pro:invertibility2}
Suppose that $\gamma_1\neq \gamma_2$, there exists $\mu>0$ such that $\mu\leq \delta$ and the regressor $t \mapsto \phi(t)$ is $\mu$-persistently exciting, there exists $\phi_M \geq 0$ such that $|\phi(s)| \leq \phi_M$ for all $s \in \R_{\geq 0}$ and that conditions (\ref{eqn:inequ1})-(\ref{eqn:inequ2}) are satisfied. 
Then, the gains $K_1$ and $K_2$ in (\ref{eqn:functionalss}) are well defined at jumps.
\end{proposition}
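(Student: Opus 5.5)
The plan is to reduce the statement to Proposition~\ref{pro:invertibility} applied jump by jump, using time-shift invariance of the estimates. The first step is to identify the jump times. From any initial condition with $\tau_a(0,0)=\tau_b(0,0)=q(0,0)=0$, the timer $\tau_a$ flows with unit rate and is reset to $0$ at each jump. Meanwhile $\tau_b$ tracks ordinary time and $q$ counts jumps. Hence the $(q+1)$-th jump occurs at $t=(q+1)\delta$, with $q(t,q)=q$ and $\tau_b(t,q)=t$. So at the jump triggered from $q$, the gain $K_1(q)$ in (\ref{eqn:functionalss}) involves only the state transition matrices $\Phi_i((q+1)\delta,q\delta)$ of $\dot{\tilde\theta}_i=-\gamma_i\phi(t)\phi^\top(t)\tilde\theta_i$ on the window $[q\delta,(q+1)\delta]$. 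Well-definedness at jumps therefore amounts to showing that $\Phi_1((q+1)\delta,q\delta)-\Phi_2((q+1)\delta,q\delta)$ is invertible for every $q\in\N$.

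Second, I fix $q$ and use the shifted regressor $\psi_q(s):=\phi(s+q\delta)$ for $s\in[0,\delta]$. Then $\Phi_i((q+1)\delta,q\delta)$ coincides with the state transition matrix $\Phi_i^{\psi_q}(\delta,0)$ of the system driven by $\psi_q$. The regressor $\psi_q$ satisfies the hypotheses of Proposition~\ref{pro:invertibility} with the same constants:
\begin{itemize}
\item boundedness, since $|\psi_q(s)|\le\phi_M$ for all $s$;
\item excitation, since $\mu$-persistent excitation with $\mu\le\delta$ gives
\[
\int_0^\delta\psi_q\psi_q^\top\,ds=\int_{q\delta}^{(q+1)\delta}\phi\phi^\top\,dt\ \ge\ \int_{q\delta}^{q\delta+\mu}\phi\phi^\top\,dt\ \ge\ \eta I,
\]
where the first inequality holds because $\phi\phi^\top\succeq0$.
\end{itemize}
The constants $\eta,\phi_M,\delta,\gamma_1,\gamma_2$ are therefore uniform in $q$, and conditions (\ref{eqn:inequ1})--(\ref{eqn:inequ2}) are unchanged.

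Third, I apply the argument of Proposition~\ref{pro:invertibility} verbatim to $\psi_q$. Lemma~\ref{lem21} gives
\[
|\Phi_1((q+1)\delta,q\delta)|\,|\Phi_2^{-1}((q+1)\delta,q\delta)|<1 .
\]
Factoring out the invertible $\Phi_2$, I write
\[
\Phi_1-\Phi_2=(\Phi_1\Phi_2^{-1}-I)\Phi_2 .
\]
By Lemma~\ref{lem}, $I-\Phi_1\Phi_2^{-1}$ is invertible, which gives $K_1(q)=I-\Phi_1\Phi_2^{-1}$. $K_2(q)=I-K_1(q)$ then follows.

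The main obstacle is confirming that the bounds in Lemma~\ref{lem21} depend only on the excitation level and the bound $\phi_M$ over the window, and not on the absolute time origin. This is what makes the shift argument legitimate. I would check this by verifying that the proof of Lemma~\ref{lem21} uses only $\int_{\text{window}}\phi\phi^\top\ge\eta I$ and $|\phi|\le\phi_M$ on that window. Excitation over the full window $[q\delta,(q+1)\delta]$, rather than only over $[q\delta,q\delta+\mu]$, is supplied by the monotonicity step above.
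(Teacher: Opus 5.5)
Your proposal is correct and is essentially the paper's argument. On each window $[q\delta,(q+1)\delta]$, $\mu$-persistent excitation with $\mu\le\delta$ gives excitation level $\eta$, and the proof of Proposition~\ref{pro:invertibility} (Lemma~\ref{lem21} plus Lemma~\ref{lem}) then goes through unchanged. Your time shift is not needed, since Lemma~\ref{lem21} is already stated for an arbitrary window $[t_0,T]$ with $t_0\ge 0$.

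One small slip, which the paper also makes: your factorization gives $K_1(q)=-\Phi_2(\Phi_1-\Phi_2)^{-1}=(I-\Phi_1\Phi_2^{-1})^{-1}$, not $I-\Phi_1\Phi_2^{-1}$. This is harmless, because only the invertibility of $\Phi_1-\Phi_2=-(I-\Phi_1\Phi_2^{-1})\Phi_2$ is needed.
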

\begin{theorem}
\label{thm:main}
Let $\delta>0$, $ \gamma_1, \gamma_2 >0$ and consider the hybrid system $\mathcal{H}$ in (\ref{eqn:model2}), with $K_1$ and $K_2$ defined in (\ref{eqn:functionalss}). Assume the unknown parameter $\theta^*:[0,+\infty) \rightarrow \R^n$ is piecewise constant, where the time instants at which the parameter changes values are defined by a sequence $\{d_k\}_{k\in \N}$ satisfying $0 \leq d_{k} < d_{k+1}$ for all $k \in \N$ and $\cup_{k=0}^{+\infty}[d_k,d_{k+1})=[0,+\infty)$. If the conditions in Proposition~\ref{pro:invertibility2} hold and if the parameter $\delta$ is chosen such that
\begin{equation}
\label{eqn:cdt_delta}
    0<2\mu\leq 2\delta <\min_{k\in \N}\{d_{k+1}-d_k\}
\end{equation}
then for each maximal solution $z$ to $\mathcal{H}$ from  $z(0,0)\in \mathcal{X}_0:=\{z \in \mathcal{X} : \theta_1=\theta_2, \tau_a=\tau_b=q=0\}$, the following property is satisfied: for each $j \in \N_{\geq 1}$ there exists an interval with nonempty interior $I'_j\subseteq I_j \cup I_{j+1}$ such that ${\theta}_1(t,j)={\theta}_2(t,j)=\theta^*$ for all $t \in I'_j$.
\end{theorem}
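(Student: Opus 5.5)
We read $I_j:=[j\delta,(j+1)\delta]$ as the $j$-th flow interval of a maximal solution. The claim should then be understood as: on some nontrivial subinterval of $I_j\cup I_{j+1}$, the hybrid arc (at jump index $j$ or $j+1$, as appropriate) satisfies $\theta_1=\theta_2=\theta^*(t)$. The approach is to reuse, on each flow interval, the one-step exact-cancellation argument of Theorem~\ref{prop:3}, and then use the dwell-time condition (\ref{eqn:cdt_delta}) to show that any window of length $2\delta$ contains at most one switching instant $d_k$.

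\textbf{Step 1 (structure of solutions).} Since $C=\mathcal{X}$, $D=\{\tau_a=\delta\}$, $\dot\tau_a=\dot\tau_b=1$, $\tau_a^+=0$ and $\tau_b^+=\tau_b$, every maximal solution from $\mathcal{X}_0$ is complete. Along it, $\tau_b(t,j)=t$, $q(t,j)=j$, and jumps occur exactly at $t_j=j\delta$, so $\dom z=\bigcup_j I_j\times\{j\}$. By Proposition~\ref{pro:invertibility2}, whose PE assumption with $\mu\le\delta$ makes $\phi$ exciting on every $[q\delta,(q+1)\delta]$, the gains in (\ref{eqn:functionalss}) are well defined at every jump.

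\textbf{Step 2 (one-interval lemma).} Right after any jump, $\theta_1(j\delta,j)=\theta_2(j\delta,j)=R(\cdot)$ by the form of $g$; for $j=0$ this equality holds by the choice of $\mathcal{X}_0$. Suppose $\theta^*$ is constant, say equal to $\bar\theta$, on $[j\delta,(j+1)\delta]$. Then the errors $\tilde\theta_i=\theta_i-\bar\theta$ flow as $\tilde\theta_i(t,j)=\Phi_i(t,j\delta)\tilde\theta_i(j\delta,j)$ with equal initial errors. The same computation as (\ref{eqn:pro}), using $K_1(j)=-\Phi_2((j+1)\delta,j\delta)(\Phi_1-\Phi_2)^{-1}$ and $K_2=I-K_1$, gives $\theta_i((j+1)\delta,j+1)=\bar\theta$.

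\textbf{Step 3 (persistence).} If at some hybrid time $\theta_1=\theta_2=\theta^*$ and $\theta^*$ stays constant afterwards, then during flow $y-\phi^\top\theta_i=0$, so $\dot\theta_i=0$. At the next jump, $R=K_1\theta^*+K_2\theta^*=\theta^*$ because $K_1+K_2=I$. Hence exactness is preserved until the next switching instant $d_k$.

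\textbf{Step 4 (case analysis for fixed $j\ge1$).} Condition (\ref{eqn:cdt_delta}) says consecutive $d_k$ are more than $2\delta$ apart.
\begin{itemize}
\item[(a)] If no $d_k$ lies in $((j-1)\delta,(j+1)\delta)$: Step 2 applied on $I_{j-1}$ gives exactness at $(j\delta,j)$. Step 3 then gives $\theta_i(t,j)=\theta^*$ on all of $I_j$ (except possibly the right endpoint, if a change occurs exactly there), so we may take $I'_j$ to be, say, $[j\delta,(j+1)\delta)$.
\item[(b)] If some $d_k\in(j\delta,(j+1)\delta)$: by spacing there is no change in $((j-1)\delta,j\delta]$. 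So Steps 2–3 give exactness on $[j\delta,d_k)$, which has nonempty interior.
\item[(c)] If some $d_k\in((j-1)\delta,j\delta]$: the estimate on $I_j$ may be wrong. However, the spacing forces $d_{k+1}>d_k+2\delta>(j+1)\delta$, so $\theta^*$ is constant on $[j\delta,(j+1)\delta]$. Step 2 gives exactness at $((j+1)\delta,j+1)$, and Step 3 gives it on $[(j+1)\delta,\min\{d_{k+1},(j+2)\delta\})\subseteq I_{j+1}$, which has nonempty interior.
\end{itemize}

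\textbf{Main obstacle.} The delicate step is the bookkeeping in case (c) and at boundary coincidences. Examples are $d_k=j\delta$ exactly, and the right-continuity convention for $\theta^*$, which determines whether $y$ on $I_{j-1}$ is generated by a single constant parameter. I would handle this by working with half-open intervals $[d_k,d_{k+1})$. I would also note that Step 2 only needs $\theta^*$ constant almost everywhere on $I_{j-1}$, since a single point does not affect the integral defining the flow.
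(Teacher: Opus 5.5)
Your proposal is correct and takes essentially the same route as the paper. On any flow interval where $\theta^*$ is constant, it applies the one-jump exact cancellation of Theorem~\ref{prop:3}, and it uses the $2\delta$ dwell time to ensure that either $I_{j-1}$ or $I_j$ is switch-free; it also reads the jump index ($j$ or $j+1$) the same way as the paper's third case. Your split by the location of $d_k$ and your explicit persistence step (zero flow and $K_1+K_2=I$) make rigorous what the paper's three cases, organized by whether $\theta_i(j\delta,j)=\theta^*$, leave implicit.
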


\begin{proof}
Let $z$ be the maximal solution of the hybrid system $\mathcal{H}$ in (\ref{eqn:model2}). We have from (\ref{eqn:cdt_delta}) that the parameter $\theta^*$ is constant over the first two intervals $I_0$ and $I_1$ and using the fact that persistence of excitation implies interval excitation we have that $\phi$ is exciting on $I_0$. Hence, we have the existence of an interval with non empty interior $I'_0 \subseteq I_1 \subseteq I_0 \cup I_1$ such that ${\theta}_1(t,j)={\theta}_2(t,j)=\theta^*$ for all $t \in I'_1$.

Let $j\geq 1$ and let us consider the two consecutive intervals $I_j$ and $I_{j+1}$. From definition of the jump map $g$ it follows directly that ${\theta}_1(j\delta,j)={\theta}_2(j\delta,j)$ for all $j\geq 1$ and one of the following three scenarios hold:
\begin{itemize}
    \item if ${\theta}_1(j\delta,j)={\theta}_2(j\delta,j)=\theta^*$ and the parameter $\theta^*$ does not change its value at $t=j\delta$, then one can ensure the existence of an interval $I'_j \subseteq I_j=[j\delta, \delta (j+1)]$ with non empty interior such that ${\theta}_1(t,j)={\theta}_2(t,j)=\theta^*$ for all $t \in I'_j$;
    
    \item if ${\theta}_1(j\delta,j)={\theta}_2(j\delta,j)=\theta^*$ and the parameter $\theta^*$ change its value at $t=j\delta$. Then from (\ref{eqn:cdt_delta}) it follows that $\theta^*$ will keep constant on the interval $I_j\cup I_{j+1}$. Moreover, from (\ref{eqn:cdt_delta}) we have that $\phi$ is exciting on $I_j$, Hence, from Proposition~\ref{prop:3} one can ensure that ${\theta}_1(t,j+1)={\theta}_2(t,j+1)=\theta^*$ for all $t \in I_{j+1}$;
    
    \item if ${\theta}_1(j\delta,j)={\theta}_2(j\delta,j)\neq \theta^*$, then the parameter $\theta^*$ changes its value on the interval $I_{j-1}$ and from (\ref{eqn:cdt_delta}) it will keep constant on the interval $I_j$ and at least on an non empty sub-interval $I'_{j}$ of $I_{j+1}$. Moreover, we have from (\ref{eqn:cdt_delta}) that $\phi$ is exciting on $I_j$, it follows then from Proposition~\ref{prop:3} that ${\theta}_1(t,j+1)={\theta}_2(t,j+1)=\theta^*$ for all $t \in I'_j$.
\end{itemize}
Hence, based on the three separate cases examined above, the result follows directly.
\end{proof}
The previous result shows that the parameter $\theta^*$ is exactly estimated, after a finite time since the time it changed. Indeed, the intervals $I'_j$ imply that whenever the parameter changes its value, the proposed parameter estimator converges to the exact value $\theta^*$ no later than $2\delta$ seconds after the parameter value changed. Let us also mention that similarly to Theorem~\ref{prop:3}, the same bounds on the solutions can be established.

\section{Robustness to measurement noise}
\label{sec:5}

In this section, we analyse the robustness of the proposed hybrid parameter estimator with respect to bounded time-varying measurement noise. For the sake of readability, we focus on constant unknown parameters. However, the robustness results can be generalized using the same approach to deal with piecewise constant unknown parameters.

The linear regression model with measurement noise is expressed as
\begin{equation}
 \label{eqn:estimator_measurements}
     y(t)=\theta^{*{\top}}\phi(t)+w(t)
 \end{equation}
where $t \mapsto w(t)\in \R$ represents the additive noise in the measurements of $y$. 
 The estimator $\hat{y}$ of the actual output $y$ is defined as $\hat{y}(t):=\hat{\theta}^{\top}(t)\phi(t)$ and the error between the actual and estimated outputs is given by $e(t):=\hat{y}(t)-y(t)=\Tilde{\theta}^{\top}(t)\phi(t) + w(t)$. The parameter estimation error for the classical continuous-time gradient algorithm\cite{narendra2012stable} is given by $
     \dot{\Tilde{\theta}}(t)=-\gamma\phi(t)\phi^{\top}(t)\Tilde{\theta}(t)+\gamma\phi(t)w(t)$, where $\gamma>0$ is the adaptation rate. Starting from the noise-free hybrid estimator defined in (\ref{eqn:model2}), our hybrid parameter estimator under the effect of the measurement noise $w$ is defined as a hybrid system $\mathcal{H}_{\nu}$ with data $(C,f_{\nu},D,g_{\nu})$ and described as follows:
\begin{equation}
\label{eqn:model3}
\begin{split}
\dot{z}_{\nu} & = f_{\nu}(z_{\nu})~~z_{\nu} \in C, \\
 z_{\nu}^+ & =  g_{\nu}(z_{\nu})~~ z_{\nu}\in D,
\end{split}
\end{equation}
with $\nu(t) =
\left[
\begin{array}{ccccc}
\gamma_1 \phi(t) w(t) &
\gamma_2 \phi(t) w(t) &
1 &
1 &
0
\end{array}
\right]^{\top}$, and the flow and jump maps are given by $f_{\nu}=f+\nu$ and $g_{\nu}=g$ with $f$ and $g$ in (\ref{eqn:fg}). The state $z_{\nu}=({\theta}_1,{\theta}_2, \tau_a, \tau_b ,q) \in \mathcal{X}=\R^n\times \R^n\times [0,\delta]\times \R_{\geq 0} \times \N$ and the flow and jump sets are given by $C:=\mathcal{X}$ and $D:=\{z \in \mathcal{X}:\tau_a=\delta\}$. 

 To analyze the effect of measurement noise, we rely on the robustness tools developed in the hybrid systems framework~\cite{goebel2009hybrid}. Consider the noise-free hybrid system $\mathcal{H}$ in (\ref{eqn:model2}) and the noisy hybrid system $\mathcal{H}_{\nu}$ in (\ref{eqn:model3}). We have the following result, showing closeness of the trajectories of the noisy hybrid system $\mathcal{H}_{\nu}$ and the noise-free hybrid system $\mathcal{H}$. We rely on the notion of $(\tau,\varepsilon)$-closeness of trajectories (See Definition $4.11$ in \cite{goebel2009hybrid}).
\begin{proposition}
\label{pro:robust1}
Consider $\delta>0$, $ \gamma_1, \gamma_2 >0$ and a regressor $t \mapsto \phi(t)\in \R^n$. Consider the noise-free hybrid system $\mathcal{H}$ in (\ref{eqn:model2}) and the noisy hybrid system $\mathcal{H}_{\nu}$ in (\ref{eqn:model3}). Let $\mathcal{K} \subseteq \R^{2n}$ be a compact set and let $\tau,\varepsilon >0$. If the conditions in Proposition~\ref{pro:invertibility} hold, then there exists $\Bar{\nu}>0$ such that if $|\nu(t)|\leq \Bar{\nu}$ for all $t \in \R_{\geq 0}$, the following holds: for every maximal solution $z_{\nu}$ to $\mathcal{H}_{\nu}$ with an initial condition $z_{\nu}(0,0) \in \mathcal{K}\times [0,\delta]\times \R_{\geq 0} \times \N \cap \mathcal{X}_0$, with $\mathcal{X}_0=\{z \in \mathcal{X}: \theta_1=\theta_2, \tau_a=\tau_b=q=0\}$, there exists a solution $z$ to $\mathcal{H}$ with initial condition $z(0,0) \in \mathcal{K}\times [0,\delta] \times \R_{\geq 0} \times \N \cap \mathcal{X}_0$ such that $z_{\nu}$ and $z$ are $(\tau, \varepsilon)$ close.
\end{proposition}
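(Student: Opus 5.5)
The plan is to invoke the general closeness-of-trajectories result for hybrid systems with perturbed flow maps, Proposition 6.34 / Theorem 6.33 in \cite{goebel2009hybrid}. To use it, $\mathcal{H}$ must be well-posed. Then $(\tau,\varepsilon)$-closeness on compact sets of initial conditions follows when the perturbation is small enough.

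Because the flow map depends on $\tau_b$ through $\phi(\tau_b)$ and $y(\tau_b)$, I first make sure that $f$ is continuous (or at least outer semicontinuous and locally bounded after convexification). If $\phi$ is merely bounded and measurable, I work with the Krasovskii regularization of $f$ in the $\tau_b$ component. Alternatively, I assume $\phi$ continuous on $[0,\tau+\delta]$, which is enough here because only hybrid times with $t+j\le\tau$ matter.

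The jump map is another issue. $g$ depends on $q$ only through the discrete values in $\N$, so it is continuous on $D$. The gain $K_1(0)$ is a fixed matrix, well defined by Proposition~\ref{pro:invertibility}, because it depends only on $\Phi_i(\delta,0)$ and not on the state. So $g$ is continuous and linear in $(\theta_1,\theta_2)$. The sets $C=\mathcal{X}$ and $D=\{\tau_a=\delta\}$ are closed. Hence the hybrid basic conditions hold and $\mathcal{H}$ is nominally well-posed (Theorem 6.30 in \cite{goebel2009hybrid}).

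Next I show that the noisy system $\mathcal{H}_\nu$ is contained in the $\rho$-perturbation $\mathcal{H}_\rho$ of $\mathcal{H}$ for $\rho$ proportional to $\bar\nu$. Here $f_\nu(z)=f(z)+\nu$ with $|\nu|\le\bar\nu$, and the timer components of $\nu$ are read as unperturbed, so the added vector lies in $\bar\nu\mathbb{B}$. Thus $f_\nu(z)\in f(z)+\bar\nu\mathbb{B}\subseteq \overline{\mathrm{co}}\, f((z+\rho\mathbb{B})\cap C)+\rho\mathbb{B}$ whenever $\rho\ge\bar\nu$. The jump data are unchanged, so $g_\nu=g$ and $D$ is the same. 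Every solution of $\mathcal{H}_\nu$ is therefore a solution of $\mathcal{H}_\rho$.

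Finally, for each compact $\mathcal{K}$, $\tau$ and $\varepsilon$, I apply the robustness result. It yields $\rho^*>0$ such that every solution of $\mathcal{H}_\rho$ with $\rho\le\rho^*$, starting in the stated compact set, is $(\tau,\varepsilon)$-close to some solution of $\mathcal{H}$ starting in the same compact set. Choosing $\bar\nu=\rho^*$ then gives the claim.

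The step I expect to be the main obstacle is matching the initial condition in $\mathcal{X}_0$. The robustness theorem produces a nominal solution starting from a point $\varepsilon$-close to $z_\nu(0,0)$, not necessarily from one in $\mathcal{X}_0$. This is easily fixed: the initial condition of $z_\nu$ already lies in $\mathcal{X}_0\cap(\mathcal{K}\times\cdots)$, and the perturbation affects only the flow. So I can take the nominal solution $z$ with $z(0,0)=z_\nu(0,0)$ directly. On each flow interval, a Gronwall estimate on $\dot{\tilde\theta}_i=-\gamma_i\phi\phi^\top\tilde\theta_i+\gamma_i\phi w$ gives $|\tilde\theta_{i,\nu}-\tilde\theta_i|\le \bar\nu\,\delta\, e^{\gamma_i\phi_M^2\delta}$. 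Both solutions jump at exactly the same times $t=k\delta$, since the timers are unaffected. Each jump multiplies the discrepancy by at most $\max_q(|K_1(q)|+|K_2(q)|)$. Over the at most $\lceil\tau/\delta\rceil+1$ intervals this gives an explicit bound, and choosing $\bar\nu$ to make it below $\varepsilon$ finishes the argument with identical hybrid time domains.
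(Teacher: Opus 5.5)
Your proof is correct, but the part that does the work is a different argument from the paper's, and it makes the first half of your proposal unnecessary.

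\textbf{The paper's route.} The paper stays with the general machinery. It asserts the hybrid basic conditions, which give well-posedness via Theorem 6.30. It then shows every maximal solution of $\mathcal{H}$ is complete (viability condition of Proposition 6.10 plus no finite escape). This supplies the pre-forward completeness hypothesis of Proposition 6.34, from which $(\tau,\varepsilon)$-closeness follows.

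\textbf{Your first half.} If you kept that route, you would need the completeness step, which you skip. The obstacle you raise is also not real. Proposition 6.34 returns a nominal solution starting in the compact set of initial conditions itself. Taking that set to be $(\mathcal{K}\times[0,\delta]\times\R_{\geq 0}\times\N)\cap\mathcal{X}_0$, which is compact, already gives $z(0,0)\in\mathcal{X}_0$.

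\textbf{Your direct comparison.} It exploits the structure of the system. Read the timer entries of $\nu$ as zero; this is needed for either route, since otherwise $|\nu|\geq\sqrt{2}$. Then the nominal and noisy solutions from the same point share the domain $\bigcup_j[j\delta,(j+1)\delta]\times\{j\}$ and the components $\tau_a,\tau_b,q$. The discrepancy $e_i=\theta_{i,\nu}-\theta_i$ obeys $\dot e_i=-\gamma_i\phi\phi^\top e_i+\gamma_i\phi w$ and $e_i^+=K_1(q)e_1+K_2(q)e_2$, independently of the initial condition. Two small fixes:
\begin{itemize}
\item Write the per-interval bound with the nonzero discrepancy carried over from the previous jump.
\item Use $|\Phi_i(t,s)|\leq 1$ instead of Gronwall. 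This gives simply $|e_i(t,j)|\leq|e_i(j\delta,j)|+\bar\nu\delta$. It also avoids needing a bound on $\phi$ outside $[0,\delta]$, which the hypotheses of Proposition~\ref{pro:invertibility} do not supply.
\end{itemize}
Only the jump with $q=0$ can amplify the discrepancy, since afterwards $K_1=I$ and $K_2=0$.

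\textbf{What each route buys.} Your route gives:
\begin{itemize}
\item an explicit $\bar\nu$, linear in $\varepsilon$;
\item uniformity over all of $\mathcal{X}_0$, so compactness of $\mathcal{K}$ is never used;
\item identical hybrid time domains, which is stronger than $(\tau,\varepsilon)$-closeness;
\item no reliance on the hybrid basic conditions.
\end{itemize}
The last point matters. As you noticed, those conditions require $f$ to be continuous in $\tau_b$, essentially continuity of $\phi$, not mere boundedness as the paper asserts. The paper's own examples use a discontinuous $\phi$, so your argument is actually the more secure one here. The paper's route buys generality: it would still apply to perturbations of the jump map, the jump set, or the timers. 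There the hybrid time domains no longer coincide, and your direct comparison breaks down.
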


\begin{proof}
To prove the robustness result, we rely on Proposition $6.34$ in~\cite{goebel2009hybrid}. Indeed, the conditions given in Proposition $6.34$ in~\cite{goebel2009hybrid} are the well posedness (see Definition $6.27$ in~\cite{goebel2009hybrid}) of the hybrid system $\mathcal{H}$ and the pre-forward completeness (see Definition $6.12$ in~\cite{goebel2009hybrid}) of its maximal trajectories. First, from the construction of the map $f_{\nu}$ and due to the boundedness of $\phi$ one can easily check that the hybrid system $\mathcal{H}$ satisfies the hybrid basic conditions (see Assumption $6.5$ in~\cite{goebel2009hybrid}). Hence, we have from Theorem $6.30$ in~\cite{goebel2009hybrid} that the system $\mathcal{H}$ is well-posed. We show that every maximal solution to the system $\mathcal{H}$ is complete. First, the solution $z$ to $\mathcal{H}$ does not escape to infinity in finite time. Let $T_C(z)$ be the tangent cone of the flow set $C$ at $z$. We have for each $z(0,0) \in C \setminus D$, $T_C(z(0,0))\cap f(z(0,0))\neq \emptyset$. Moreover, we have that $g(D) \subset C \subset C\cup D$. Hence, item (a) of Proposition $6.10$ in~\cite{goebel2009hybrid} holds, which implies that every maximal solution to the system $\mathcal{H}$ is complete. The closeness result follows then from Proposition $6.34$ in~\cite{goebel2009hybrid}.
\end{proof}
Next, we show that under the PE condition in Definition \ref{def:persistency2}, the noisy system $\mathcal{H}_{\nu}$ is input to state stable (ISS) for any essentially bounded measurement noise $w$.
\begin{proposition}
\label{pro:robust2}
Let $\delta, \gamma_1, \gamma_2 >0$ and a regressor $t \mapsto \phi(t)\in \R^n$. Consider the noisy hybrid system $\mathcal{H}_{\nu}$ in (\ref{eqn:model3}). If the conditions in Proposition~\ref{pro:invertibility} hold, the regressor $\phi(t)$ is $\delta$-persistently exciting, there exists $\phi_M > 0$ such that $|\phi(t)| \leq \phi_M$ for all $t \geq 0$, then each maximal solution $z_{\nu}$ to $\mathcal{H}_{\nu}$ from $z_{\nu}(0,0) \in \mathcal{X}_0=\{z_{\nu} \in \mathcal{X}: \theta_1=\theta_2, \tau_a=\tau_b=q=0\}$ satisfies
\begin{eqnarray}
   |z_{\nu}(t,j)|_{\mathcal{A}}&\leq \rho(j)\beta(|z_{\nu}(0,0)|_{\mathcal{A}},t)+ \alpha_1(|w|_{\infty}) \nonumber \\ \label{eqn:robustaa}  &+(1-\rho(j))(\alpha_2(|w|_{\infty})) 
\end{eqnarray}
with $\rho(0)=1$ and $\rho(j)=0$ for $j \in \N_{>0}$, $\beta(s,t)=\exp(\delta)\exp(-t)s$, $\alpha_1(s)=\max(\gamma_1,\gamma_2)\phi_M\delta s$ and
$\alpha_2(s)=\phi_M((1+\kappa_1\kappa_2)\delta(\gamma_1+\gamma_2)+\gamma_1(\frac{\kappa}{\lambda}+\delta)+\gamma_2 \delta) s$ with $\kappa_1=\sqrt{1-\frac{2\eta\gamma_1}{(1+\phi_M^2\gamma_1\delta)^2}}$ and $\kappa_2=\sqrt{1+\frac{2\gamma_2\phi_M^2\delta}{(1-\phi_M^2\gamma_2\delta)^2}}$.
\end{proposition}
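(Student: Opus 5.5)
The plan is to write each estimation error in variation-of-constants form and treat the flow interval before the first jump ($j=0$) separately from the intervals after it ($j\geq 1$); the factor $\rho(j)$ in the statement reflects exactly this split. Throughout, $\tilde\theta_i:=\theta_i-\theta^*$. Along flows of $\mathcal{H}_\nu$ the errors satisfy
\[
\dot{\tilde\theta}_i=-\gamma_i\phi\phi^\top\tilde\theta_i+\gamma_i\phi w .
\]
Hence on each flow interval $[t_j,t_{j+1}]$ we have
\[
\tilde\theta_i(t,j)=\Phi_i(t,t_j)\tilde\theta_i(t_j,j)+\gamma_i\int_{t_j}^{t}\Phi_i(t,s)\phi(s)w(s)\,ds .
\]
Since $-\gamma_i\phi\phi^\top\leq 0$, we have $|\Phi_i(t,s)|\leq 1$. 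On an interval of length at most $\delta$ the integral is therefore bounded by $\gamma_i\phi_M\delta|w|_\infty$.

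\textbf{Case $j=0$, $t\in[0,\delta]$.} The bound above gives $|\tilde\theta_i(t,0)|\leq|\tilde\theta_i(0,0)|+\gamma_i\phi_M\delta|w|_\infty$. Because $t\leq\delta$, we have $1\leq e^{\delta}e^{-t}$. Combining the two components, and noting that $|z|_{\mathcal A}$ is the Euclidean norm of $(\tilde\theta_1,\tilde\theta_2)$, yields the $\rho(0)\beta+\alpha_1$ term. One may need to absorb a factor of $\sqrt2$ or use a max-norm convention; I would adjust $\alpha_1$ if needed.

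\textbf{Case $j\geq 1$.} This is the key step. At the first jump, $\tilde\theta_1(0,0)=\tilde\theta_2(0,0)$, so the nominal parts cancel exactly as in (\ref{eqn:pro}). The same computation gives
\[
\tilde\theta_i(\delta,1)=K_1(0)\,\varepsilon_1+(I-K_1(0))\,\varepsilon_2 ,
\]
where $\varepsilon_i=\gamma_i\int_0^\delta\Phi_i(\delta,s)\phi w\,ds$ is the noise-induced part, with $|\varepsilon_i|\leq\gamma_i\phi_M\delta|w|_\infty$. The bound on $K_1(0)=I-\Phi_1\Phi_2^{-1}$ comes from the proof of Proposition~\ref{pro:invertibility} and Lemma~\ref{lem21}, which give $|\Phi_1(\delta,0)|\leq\kappa_1$ and $|\Phi_2^{-1}(\delta,0)|\leq\kappa_2$. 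Then $|K_1(0)|\leq 1+\kappa_1\kappa_2$ and $|I-K_1(0)|=|\Phi_1\Phi_2^{-1}|\leq\kappa_1\kappa_2$. This already produces the $(1+\kappa_1\kappa_2)\delta(\gamma_1+\gamma_2)$ term of $\alpha_2$, and importantly the bound is independent of the initial condition.

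For $j\geq 2$, $K_1=I$, so both estimates are reset to $\theta_1$. Thereafter $\tilde\theta_1$ evolves as a gradient estimator driven by noise, and $\tilde\theta_2$ is simply reset to $\tilde\theta_1$ at each jump. Here $\delta$-persistent excitation is used. By Lemma~\ref{lem21} applied on each window $[k\delta,(k+1)\delta]$, $|\Phi_1((k+1)\delta,k\delta)|\leq\kappa_1<1$. Iterating gives
\[
|\tilde\theta_1(k\delta,k)|\leq\kappa_1^{k-1}|\tilde\theta_1(\delta,1)|+\sum_{m}\kappa_1^{m}\gamma_1\phi_M\delta|w|_\infty ,
\]
a geometric sum bounded by a constant of the form $\gamma_1\phi_M\frac{\kappa}{\lambda}|w|_\infty$. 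Here $\kappa,\lambda$ are the exponential-stability constants of $\Phi_1$ under PE, i.e. $|\Phi_1(t,s)|\leq\kappa e^{-\lambda(t-s)}$, which I would state explicitly, for instance $\kappa=1/\kappa_1$ and $\lambda=-\ln(\kappa_1)/\delta$. Adding the within-interval contribution $\gamma_1\phi_M\delta|w|_\infty$ and the initial term at $(\delta,1)$ gives the $\gamma_1(\frac{\kappa}{\lambda}+\delta)$ part. The $\tilde\theta_2$ component adds $\gamma_2\phi_M\delta|w|_\infty$ of flow noise between resets.

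\textbf{Main obstacle.} The hardest part is this bookkeeping for $j\geq1$. It must combine the residual from the first jump, which decays through $\Phi_1$, with the accumulated noise via the exponential bound on $\Phi_1$, so as to get exactly the stated $\alpha_2$ with no $\beta$-term. That works because the nominal part vanishes at $(\delta,1)$, leaving only noise-driven terms. I would handle it by writing $\tilde\theta_1(t,j)=\Phi_1(t,\delta)\tilde\theta_1(\delta,1)+\gamma_1\int_\delta^t\Phi_1(t,s)\phi w\,ds$. This continuous-time formula is valid because the $q\ge1$ jumps leave $\theta_1$ unchanged. I would then bound the two pieces by $|\tilde\theta_1(\delta,1)|$ and $\gamma_1\phi_M\frac{\kappa}{\lambda}|w|_\infty$ respectively.
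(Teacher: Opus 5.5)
Your proposal is correct and follows essentially the same route as the paper's proof: variation of constants on each flow interval with $|\Phi_i|\le 1$ before the first jump, exact cancellation of the nominal part at $(\delta,1)$ with $K_1(0)=I-\Phi_1(\delta,0)\Phi_2^{-1}(\delta,0)$ bounded through Lemma~\ref{lem21}, and then the PE-based exponential bound on $\Phi_1$ for $\tilde\theta_1$ (unchanged by later jumps), with $\tilde\theta_2$ reset to $\tilde\theta_1$ plus at most $\gamma_2\phi_M\delta|w|_\infty$ of flow noise between resets. Your version is slightly cleaner than the paper's: you use the correct kernel $\Phi_i(t,s)$ where the paper writes $\Phi_i(s,0)$, you make $\kappa=1/\kappa_1$, $\lambda=-\ln(\kappa_1)/\delta$ explicit, and you get the sharper $|I-K_1(0)|\le\kappa_1\kappa_2$; moreover, the $\sqrt{2}$ caveat you raise is genuine, since the paper's closing ``unify the bounds'' step silently passes from componentwise bounds to the Euclidean $|z_\nu|_{\mathcal A}$, and e.g.\ $n=1$, $\phi\equiv 1$, $\delta=\eta=1$, $\gamma_1=0.05$, $\gamma_2=0.04$, constant $w$, $\theta_1(0,0)=\theta_2(0,0)=\theta^*$ gives $|z_\nu(\delta,0)|_{\mathcal A}\approx 0.063\,|w|_\infty>\alpha_1(|w|_\infty)=0.05\,|w|_\infty$, so the constants (or a max-norm convention) really must be adjusted.
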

\begin{proof}
Let $z_{\nu}$ be the maximal solution of the hybrid system $\mathcal{H}_{\nu}$ in (\ref{eqn:model3}), from an initial condition $z_{\nu}(0,0)\in \mathcal{X}_0$. Let $\Tilde{\theta}_i=\theta_i-\theta^*$ for each $i\in \{1,2\}$. Let us analyze the error dynamics. The evolution of the parameters error before the first jump, i.e, $t \in [0,\delta]$ and $j=0$, is given by $\Tilde{\theta}_i(t,0)=\Phi_i(t,0)\Tilde{\theta}_i(0,0)+\int_0^t\gamma_i\Phi_i(s,0)\phi(s)w(s)ds$. Hence, from (\ref{eqn:expstab}) in  Lemma~\ref{lem21} in the Appendix \ref{sec:App} and using the fact that $t \mapsto \phi(t)$ is $\delta$-persistently exciting, we have
    \begin{align}
     |\Tilde{\theta}_i(t,0)|&\leq  |\Phi_i(t,0)\Tilde{\theta}_i(0,0)|+\gamma_i\phi_M|w|_{\infty}\int\limits_0^t|\Phi_i(s,0)|ds \nonumber
      \\[-8pt] \label{eqn:robusta}
     &\leq \beta(|\Tilde{\theta}_i(0,0)|,t)+\gamma_i\phi_M|w|_{\infty}\delta
 \end{align}
 where the last inequality follows from the fact that the first jump does not occur before $\delta$ and that $|\Phi_i(t,0)| \leq 1$ for all $t \geq 0$. Then, for the solution to be maximal, a jump occurs at $(t,j)=(\delta,0)$. After the first jump and from the expressions of $K_1$ and $K_2$, which are well defined according to Proposition~\ref{pro:invertibility}. Moreover, using the fact that $z_{\nu}(0,0) \in \mathcal{X}_0$ and the description of the functional $K_1$ in (\ref{eqn:functionals}), we have that,
 \begin{align}
\label{eqn:prop6}
|\tilde{\theta}_i&(\delta,1)|=|R(\tilde{\theta}_1(\delta,0),\tilde{\theta}_2(\delta,0),q(\delta,0))| \nonumber\\ \nonumber =&\left|-K_1(q(\delta,0))\int_0^{\delta}\gamma_1\Phi_1(s,0)\phi(s)w(s)ds \right. \\& \left. +  (K_1(q(\delta,0))-I)\int_0^{\delta}\gamma_2\Phi_2(s,0)\phi(s)w(s)ds\right|.\end{align}
 Moreover, we have that $K_1=-\Phi_2(\delta,0)(\Phi_1(\delta,0)-\Phi_2(\delta,0))^{-1}=I-\Phi_1(\delta,0)\Phi_2^{-1}(\delta,0)$. Hence, it follows from Lemma \ref{lem21} in Appendix \ref{sec:App} that $|K_1| \leq (1+\kappa_1\kappa_2)$ and $|I-K_1| \leq (1+\kappa_1\kappa_2)$. Then, one gets from (\ref{eqn:prop6}) that
 \begin{equation}
 \label{eqn:prop6_1}
     |\Tilde{\theta}_i(\delta,1)| \leq (1+\kappa_1\kappa_2)\delta\phi_M|w|_{\infty}(\gamma_1+\gamma_2).
 \end{equation}
Since the first component $\tilde{\theta}_1$ has only one nontrivial jump at $t=\delta$, $\tilde{\theta}_1$ at $(t,j)$ for $t \geq \delta$ and $j \geq 1$ is given by $\Tilde{\theta}_1(t,j)=\Phi_1(t,\delta)\Tilde{\theta}_1(\delta,1)+\gamma_1\int\limits_{\delta}^t\Phi_1(s,\delta)\phi(s)w(s)ds$. Hence, using the fact that $ t \mapsto \phi(t)$ is $\delta$-persistently exciting, we have for $j\delta \leq t < (j+1) \delta$, $j \in \mathbb{N}_{\geq 1}$, that
\begin{align} 
 |\tilde{\theta}_1&(t,j)| \leq(1+\kappa_1\kappa_2)\delta\phi_M|w|_{\infty}(\gamma_1+\gamma_2) \nonumber \\ &+\gamma_1\phi_M|w|_{\infty}\left(\int\limits_{\delta}^{j\delta}\kappa e^{-\lambda (s-\delta)}ds+\delta \right) \nonumber \\
 \label{eqn:robustc} 
 &\leq \phi_M|w|_{\infty}\left((1+\kappa_1\kappa_2)\delta(\gamma_1+\gamma_2)+\gamma_1\left(\frac{\kappa}{\lambda}+\delta\right)\right)
\end{align}
where the first inequality comes from (\ref{eqn:expstab}), the fact that $|\Phi_1(u,\delta)| \leq 1$ for all $u \in [j\delta,t]$ and the use of the triangle inequality. The second component $\tilde{\theta}_2$ has successive jumps at $t=j\delta$, $j\geq 1$, the solution of $\tilde{\theta}_2$ at $(t,j)$ for $ j\delta \leq t < (j+1) \delta $ and $j \geq 1$ is given by $\Tilde{\theta}_2(t,j)=\Phi_2(t,j\delta)\Tilde{\theta}_1(j\delta,j)+\gamma_2\int\limits_{j\delta}^t\Phi_2(s,j\delta)\phi(s)w(s)ds$.
Hence, using the fact that $ t \mapsto \phi(t)$ is $\delta$-persistently exciting, we have for $j \delta \leq t  < (j+1) \delta$, $j \in \mathbb{N}_{\geq 1}$, that
\begin{align}
  |\Tilde{\theta}_2&(t,j)| \leq 
 |\Tilde{\theta}_1(j\delta,j)| +\gamma_2\phi_M|w|_{\infty}\left|\int\limits_{j\delta}^t\Phi_2(s,j\delta)ds \right| \nonumber\\ &\leq  |\Tilde{\theta}_1(j\delta,j)|+ \gamma_2\phi_M|w|_{\infty}\delta \nonumber \\ &\leq \phi_M|w|_{\infty}\left((1+\kappa_1\kappa_2)\delta(\gamma_1+\gamma_2)+\gamma_1\left(\frac{\kappa}{\lambda}+\delta \right)\right) \nonumber \\ \label{eqn:robustcc}
 &+ \phi_M|w|_{\infty}\gamma_2 \delta
 \end{align}
 where the first and second inequalities follow from the fact that $|\Phi_2(u,j\delta)| \leq 1$ for all $u \in [j\delta,t]$ and the last inequality comes from (\ref{eqn:robustc}). Therefore, we can unify the bounds obtained in (\ref{eqn:robusta}), (\ref{eqn:prop6_1}), (\ref{eqn:robustc}) and (\ref{eqn:robustcc}) for all $(t,j) \in \dom z_{\nu}$ to get (\ref{eqn:robustaa}).
\end{proof}
\begin{remark}
Intuitively, the bounds in (\ref{eqn:robustaa}) capture the effect of the initial condition $z_{\nu}(0,0)$ and the measurement noise $w$ on the parameter estimation error. It can be seen from (\ref{eqn:prop6_1}) that while the hybrid estimator can estimate the unknown parameter at the first jump $(t,j)=(\delta,1)$ in the noise-free case, the estimation error at the first jump in the noisy case is bounded by $(1+\kappa_1\kappa_2)\delta\phi_M|w|_{\infty}(\gamma_1+\gamma_2)$. Hence, the estimation error in the noisy case is smaller when the term $(1+\kappa_1\kappa_2)$ is smaller. Then, to make the estimation error smaller at the second jump, the idea is to select the estimation parameters $\delta$, $\gamma_1$ and $\gamma_2$ to make the term $\kappa_1\kappa_2$ closer to $0$ while ensuring the satisfaction of conditions (\ref{eqn:inequ1})-(\ref{eqn:inequ2}).
\end{remark}  

We now show that under finite-time excitation condition, the noisy system $\mathcal{H}_{\nu}$ is integral input to state stable (iISS) \cite{angeli2000characterization} with respect to measurement noise $w$.

\begin{proposition}
\label{pro:robust3}
Let $\delta, \gamma_1, \gamma_2 >0$ and a regressor $t \mapsto \phi(t)\in \R^n$. Consider the noisy hybrid system $\mathcal{H}_{\nu}$ in (\ref{eqn:model3}). If the conditions in Proposition~\ref{pro:invertibility} hold and there exists $\phi_M > 0$ such that $|\phi(t)| \leq \phi_M$ for all $t \geq0$, then any solution $z_{\nu}$ to $\mathcal{H}_{\nu}$ from $z_{\nu}(0,0) \in \mathcal{X}_0=\{z_{\nu} \in \mathcal{X}: \theta_1=\theta_2, \tau_a=\tau_b=q=0\}$ satisfies
\begin{align}
   |z_{\nu}(t,j)|_{\mathcal{A}}&\leq \rho(j)\beta(|z_{\nu}(0,0)|_{\mathcal{A}},t)+ \int_0^t\alpha_1(|w(s)|)ds \nonumber \\ \label{eqn:robustaa2}  &+(1-\rho(j))\left(\int_0^t\alpha_2(|w(s)|)ds\right) 
\end{align}
with $\rho(0)=1$ and $\rho(j)=0$ for $j \in \N_{>0}$, $\beta(s,t)=\exp(\delta)\exp(-t)s$, $\alpha_1(s)=\max\{\gamma_1,\gamma_2\}\phi_Ms$ and
$\alpha_2(s)=((1+\kappa_1\kappa_2)\delta+1) (\gamma_1+\gamma_2)\phi_Ms$ with $\kappa_1=\sqrt{1-\frac{2\eta\gamma_1}{(1+\phi_M^2\gamma_1\delta)^2}}$ and $\kappa_2=\sqrt{1+\frac{2\gamma_2\phi_M^2\delta}{(1-\phi_M^2\gamma_2\delta)^2}}$.
\end{proposition}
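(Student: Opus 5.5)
I will mirror the proof of Proposition~\ref{pro:robust2}, replacing every use of persistent excitation (the exponential bound (\ref{eqn:expstab})) with the plain contraction property $|\Phi_i(t,s)|\leq 1$ for $t\geq s$. This property holds for any regressor because $\frac{d}{dt}|\tilde\theta_i|^2=-2\gamma_i(\phi^\top\tilde\theta_i)^2\leq 0$. I will also keep the noise inside the integral instead of bounding it by $|w|_\infty$. Take a maximal solution $z_\nu$ from $\mathcal{X}_0$ and set $\tilde\theta_i=\theta_i-\theta^*$. Since $|z_\nu|_{\mathcal{A}}$ is controlled by $|\tilde\theta_1|+|\tilde\theta_2|$, up to the norm convention used in the paper, it suffices to bound each $|\tilde\theta_i(t,j)|$.

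\textbf{Step 1 ($j=0$, $t\in[0,\delta]$).} By variation of constants,
\[
\tilde\theta_i(t,0)=\Phi_i(t,0)\tilde\theta_i(0,0)+\gamma_i\int_0^t\Phi_i(t,s)\phi(s)w(s)\,ds .
\]
Since $|\Phi_i|\leq 1$, this gives $|\tilde\theta_i(t,0)|\leq|\tilde\theta_i(0,0)|+\gamma_i\phi_M\int_0^t|w(s)|ds$. Because $t\leq\delta$, we have $|\tilde\theta_i(0,0)|\leq e^{\delta}e^{-t}|\tilde\theta_i(0,0)|=\beta(\cdot,t)$. This yields the $\rho(0)=1$ branch with $\alpha_1(s)=\max\{\gamma_1,\gamma_2\}\phi_M s$. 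Here only the excitation on $[0,\delta]$ is needed, and only to make the jump well defined via Proposition~\ref{pro:invertibility}.

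\textbf{Step 2 (first jump).} At $(\delta,0)$ I will use $\tilde\theta_1(0,0)=\tilde\theta_2(0,0)$ and the choice of $K_1$ to cancel the initial-condition terms exactly, as in (\ref{eqn:pro}). This leaves (\ref{eqn:prop6}) with the noise integrals. Using $K_1=I-\Phi_1(\delta,0)\Phi_2^{-1}(\delta,0)$ and Lemma~\ref{lem21}, we get $|K_1|,|I-K_1|\leq 1+\kappa_1\kappa_2$. Hence
\[
|\tilde\theta_i(\delta,1)|\leq(1+\kappa_1\kappa_2)(\gamma_1+\gamma_2)\phi_M\int_0^\delta|w(s)|ds .
\]

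\textbf{Step 3 ($j\geq1$).} Since $K_1(q)=I$ for $q\neq0$, every later jump sets $\theta_2^+=\theta_1$ and leaves $\theta_1$ unchanged. Therefore
\[
\tilde\theta_1(t,j)=\Phi_1(t,\delta)\tilde\theta_1(\delta,1)+\gamma_1\int_\delta^t\Phi_1(t,s)\phi w\,ds ,
\]
and contraction gives $|\tilde\theta_1(t,j)|\leq|\tilde\theta_1(\delta,1)|+\gamma_1\phi_M\int_\delta^t|w|$. For $\tilde\theta_2$ on $[j\delta,t]$, it restarts from $\tilde\theta_1(j\delta,j)$, so
\[
|\tilde\theta_2(t,j)|\leq|\tilde\theta_1(j\delta,j)|+\gamma_2\phi_M\int_{j\delta}^t|w| .
\]
Combining these with Step 2 and enlarging all integration ranges to $[0,t]$, both errors are bounded by $\big((1+\kappa_1\kappa_2)\delta+1\big)(\gamma_1+\gamma_2)\phi_M\int_0^t|w(s)|ds$.

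The main obstacle is Step 2. The Step 2 bound carries $\int_0^\delta|w|$ without a factor $\delta$, while the stated $\alpha_2$ has the factor $(1+\kappa_1\kappa_2)\delta$. Matching the stated form will require care. Either I absorb constants differently, for instance by writing $(1+\kappa_1\kappa_2)\int_0^\delta|w|\leq((1+\kappa_1\kappa_2)\delta+1)\int_0^t|w|$, which holds when $\delta\geq1$ or after adjusting constants. Or I use a Cauchy–Schwarz/time-scaling argument. If neither reproduces the exact constant, I would state $\alpha_2$ with coefficient $(1+\kappa_1\kappa_2)+1$ and note that the iISS form is unchanged. Either way, the essential point is the exact cancellation of the initial-condition terms at the first jump: it is what removes $\beta$ for $j\geq1$, so that no excitation beyond $[0,\delta]$ is needed.
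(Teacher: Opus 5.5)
Your argument follows the paper's proof step for step. Both use contraction $|\Phi_i(t,s)|\le 1$ before the first jump, with $e^{\delta-t}\ge 1$ absorbing the initial error, and exact cancellation of the initial-condition terms at $(\delta,0)$. Both bound $K_1$ and $I-K_1$ via Lemma~\ref{lem21}, and both let $\tilde\theta_1$ flow through the trivial later jumps while $\tilde\theta_2$ restarts from $\tilde\theta_1$. You write the correct kernel $\Phi_i(t,s)$ where the paper has $\Phi_i(s,0)$, which is harmless for the norm bounds. Your diagnosis of the constant is also correct. In (\ref{eqn:prop6_2}) the paper places a factor $\delta$ in front of $\int_0^t|w(s)|ds$ without justification. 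That factor is inherited from (\ref{eqn:prop6_1}) in Proposition~\ref{pro:robust2}, where it came from $\int_0^\delta|w|\le\delta|w|_\infty$. In the $L^1$ setting nothing supports it, and your Step~2 bound is the right one.

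The gap in your write-up is that you leave this unresolved. Your Step~3 asserts the stated coefficient, but summing your own estimates only gives $(2+\kappa_1\kappa_2)(\gamma_1+\gamma_2)$. Your fallbacks either add the hypothesis $\delta\ge 1$ or change $\alpha_2$, and no Cauchy--Schwarz argument can put a factor $\delta$ in front of an $L^1$ norm. Neither fallback is needed: three sharpenings give (\ref{eqn:robustaa2}) as stated for every $\delta>0$.

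First, $I-K_1=\Phi_1(\delta,0)\Phi_2^{-1}(\delta,0)$, so $|I-K_1|\le\kappa_1\kappa_2<1$ by Lemma~\ref{lem21} and (\ref{eqn:inequ2}). Combined with $|K_1|\le 1+\kappa_1\kappa_2$, this gives $|\tilde\theta_i(\delta,1)|\le c_0\phi_M\int_0^\delta|w(s)|ds$, where $c_0:=\gamma_1+\kappa_1\kappa_2(\gamma_1+\gamma_2)$.

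Second, for $j\ge 1$ and $t\in[j\delta,(j+1)\delta]$, your Step~3 estimates read
\[
|\tilde\theta_2(t,j)|\le\phi_M\Big(c_0\int_0^{\delta}|w|+\gamma_1\int_{\delta}^{j\delta}|w|+\gamma_2\int_{j\delta}^{t}|w|\Big),
\]
and analogously for $\tilde\theta_1$. Since the three intervals are disjoint, bound this by $\max\{c_0,\gamma_1,\gamma_2\}\,\phi_M\int_0^t|w|$ rather than summing the coefficients.

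Third, for $j\ge 1$ the right-hand side of (\ref{eqn:robustaa2}) contains both $\int_0^t\alpha_1$ and $\int_0^t\alpha_2$. Its coefficient of $\phi_M\int_0^t|w|$ is therefore $\max\{\gamma_1,\gamma_2\}+((1+\kappa_1\kappa_2)\delta+1)(\gamma_1+\gamma_2)$. This exceeds $c_0<\gamma_1+(\gamma_1+\gamma_2)$, and it also exceeds $\gamma_1$ and $\gamma_2$, so the $\delta$-term in $\alpha_2$ is not even used.

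All of this is componentwise, which is what you and the paper actually bound. Your hedge about the norm convention is warranted. With the Euclidean distance $|z_\nu|_{\mathcal{A}}=(|\tilde\theta_1|^2+|\tilde\theta_2|^2)^{1/2}$, even the $j=0$ bound can fail. For example, take $n=1$, $\phi\equiv\phi_M$, $z_\nu(0,0)\in\mathcal{A}$, and a short noise pulse just before $\delta$. This produces the coefficient $\sqrt{\gamma_1^2+\gamma_2^2}$ instead of $\max\{\gamma_1,\gamma_2\}$.
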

\begin{proof}
Let $z_{\nu}$ be the maximal solution of the hybrid system $\mathcal{H}_{\nu}$ in (\ref{eqn:model3}), from an initial condition $z_{\nu}(0,0)\in \mathcal{X}_0$. Let $\Tilde{\theta}_i=\theta_i-\theta^*$ for each $i\in \{1,2\}$, and let us analyze the error dynamics. For each $i$, the evolution of the parameters error $\tilde{\theta}_i$ before the first jump, i.e, $t \in [0,\delta]$ and $j=0$, is given by $\Tilde{\theta}_i(t,0)=\Phi_i(t,0)\Tilde{\theta}_i(0,0)+\int_0^t\gamma_i\Phi_i(s,0)\phi(s)w(s)ds$. Hence, from (\ref{eqn:expstab}) in  Lemma~\ref{lem21} and using the fact that $t \mapsto \phi(t)$ is exciting over the interval $[0, \delta]$, we have
    \begin{align}
     |\Tilde{\theta}_i(t,0)| \leq \beta(|\Tilde{\theta}_i(0,0)|,t)+\gamma_i\phi_M\int\limits_0^t|w(s)|ds
     \label{eqn:robusta2}
 \end{align}
 where the inequality follows from the fact that the first jump does not occur before $\delta$ and that $|\Phi_i(t,0)| \leq 1$ for all $t \geq 0$. 
 
 Then, for the solution to be maximal, a jump occurs at $(t,j)=(\delta,0)$. After the first jump and from the expressions of $K_1$ and $K_2$, which are well defined according to Proposition~\ref{pro:invertibility}, we have the chain of equalities in (\ref{eqn:prop6}). Moreover, we have that $K_1=-\Phi_2(\delta,0)(\Phi_1(\delta,0)-\Phi_2(\delta,0))^{-1}=I-\Phi_1(\delta,0)\Phi_2^{-1}(\delta,0)$. Hence, it follows from Lemma \ref{lem21} in Appendix \ref{sec:App} that $|K_1| \leq (1+\kappa_1\kappa_2)$ and $|I-K_1| \leq (1+\kappa_1\kappa_2)$. Then, one gets from (\ref{eqn:prop6}) that
 \begin{equation}
 \label{eqn:prop6_2}
     |\Tilde{\theta}_i(\delta,1)| \leq (1+\kappa_1\kappa_2)\delta\phi_M(\gamma_1+\gamma_2)\int\limits_0^t|w(s)|ds.
 \end{equation}
 Since the first component $\tilde{\theta}_1$ has only one nontrivial jump at $t=\delta$, the solution of $\tilde{\theta}_1$ at $(t,j)$ for $t \geq \delta$ and $j \geq 1$ is given by
$\Tilde{\theta}_1(t,j)=\Phi_1(t,\delta)\Tilde{\theta}_1(\delta,1)+\gamma_1\int\limits_{\delta}^t\Phi_1(s,\delta)\phi(s)w(s)ds$. Hence, from (\ref{eqn:prop6_2}), we have for all $t\geq \delta$ and $j\geq 1$,
\begin{align}
|\Tilde{\theta}_1&(t,j)| \leq (1+\kappa_1\kappa_2)\delta\phi_M(\gamma_1+\gamma_2)\int\limits_0^t|w(s)|ds \nonumber \\[-15pt]
&+\gamma_1\phi_M\int\limits_{\delta}^{t}|w(s)|ds  \nonumber \\[-15pt]
&\leq ((1+\kappa_1\kappa_2)\delta(\gamma_1+\gamma_2)+\gamma_1) \phi_M\int\limits_0^t|w(s)|ds.
\label{eqn:robustc_2}
\end{align}
where the first inequality comes from the fact that $|\Phi_i(u,\delta)| \leq 1$ for all $u \in [\delta,t]$. The second component $\tilde{\theta}_2$ has successive jumps at $t=j\delta$, $j\geq 1$, the solution of $\tilde{\theta}_2$ at $(t,j)$ for $ j\delta \leq t < (j+1) \delta $ and $j \geq 1$ is given by $\Tilde{\theta}_2(t,j)=\Phi_2(t,j\delta)\Tilde{\theta}_1(j\delta,j)+\gamma_2\int\limits_{j\delta}^t\Phi_2(s,j\delta)\phi(s)w(s)ds$. Hence, we have for $j \delta \leq t  < (j+1) \delta$, $j \in \mathbb{N}_{\geq 1}$,
\begin{align}
|\Tilde{\theta}_2&(t,j)| \leq |\Tilde{\theta}_1(j\delta,j)| +\gamma_2 \phi_M \int\limits_{j\delta}^t|w(s)|ds  \nonumber\\[-15pt] & \leq ((1+\kappa_1\kappa_2)\delta(\gamma_1+\gamma_2)+\gamma_1) \phi_M\int\limits_0^t|w(s)|ds \nonumber \\[-15pt] &+ \gamma_2 \phi_M \int\limits_{j\delta}^t|w(s)|ds \nonumber \\[-15pt] \label{eqn:robustcc_2} &\leq ((1+\kappa_1\kappa_2)\delta+1) (\gamma_1+\gamma_2)\phi_M\int\limits_0^t|w(s)|ds.
 \end{align}
 where the first and second inequalities follow from the fact that $|\Phi_2(t,j\delta)| \leq 1$, and the last inequality comes from (\ref{eqn:robustc_2}). Therefore, we can unify the bounds obtained in (\ref{eqn:robusta2}), (\ref{eqn:prop6_2}), (\ref{eqn:robustc_2}) and (\ref{eqn:robustcc_2}) for all $(t,j) \in \dom z_{\nu}$ to get (\ref{eqn:robustaa2}).
\end{proof}
{\begin{remark}
In practice, exact values of $\phi_M$ and $\eta$ may be unknown. When using conservative estimates $\hat{\phi}_M \geq \phi_M$ and $\hat{\eta} \leq \eta$, all theoretical guarantees remain valid: If conditions (\ref{eqn:inequ1})-(\ref{eqn:inequ2}) hold for $(\hat{\phi}_M, \hat{\eta})$, they hold for $(\phi_M, \eta)$.
The robustness analysis in terms of ISS/iISS bounds become more conservative and the finite-time convergence (Theorem~\ref{prop:3}) is preserved.
\end{remark}}
{\begin{remark}
    An alternative finite-time estimation approach is to use a single estimator with reset map $\theta^+ = (I-\Phi(\delta,0))^{-1}(\theta(\delta,0)-\Phi(\delta,0)\theta(0,0))$. While this achieves finite-time convergence for constant parameters, our two-estimator formulation in (\ref{eqn:fg}) offers the advantage of having a memoryless implementation. Our reset map $\theta_i^+ = K_1\theta_1 + K_2\theta_2$ uses only current values at jump instants, whereas the single-estimator approach requires permanently storing $\theta(0,0)$.
\end{remark}}
\section{Examples}
\label{sec:6}
In this part, we illustrate the practicality of the proposed results on two examples\footnote{Files related to simulations can be accessed via the link: \hyperlink{https://github.com/cliclab-um6p/Hybrid-Finite-Time-Parameter-Estimation.git}{https://github.com/cliclab-um6p/Hybrid-Finite-Time-Parameter-Estimation.git}}.
\subsection{Constant Unknown Parameters}
Consider the linear regression model in \ref{eqn:estimator_measurements}, with $\theta^{*}=(1,1)$ and $\phi=[\phi_1(t), \phi_2(t)]^{\top}$, with $\phi_2(t)=4\exp(-10t)$ and $\phi_1$ is given by $\phi(t) = 4$ if  $t\in [0,2]$ and $\phi(t) = 0$ if  $t\in >2 $. It is clear that the regressor $\phi$ is not persistently exciting, so the classical gradient descent algorithm is not applicable, in the sense that it would not successfully estimate the parameters. Moreover, we have that $\phi$ is square integrable, so the recent result presented in~\cite{aranovskiy2016performance} cannot be applied either. Finally, the system does not satisfy the conditions for algorithms $1$ and $2$ in~\cite{wang2019robust} either. It can also be seen that $\phi$ is only exciting over the interval $[0, 2]$. The numerical values for the simulations are given by: $\theta_1(0,0)=\theta_2(0,0)=(7,5)$, $\gamma_1=0.05$, $\gamma_2=0.5$. The jump parameter $\delta=1$ and one can check that conditions (\ref{eqn:inequ1})-(\ref{eqn:inequ2}) are satisfied. The simulation results are shown in Figure~\ref{fig:rest1}. The solid lines represent the estimation of both parameters in the noise-free case, $w(t)=0$, and the dashed lines represent the case with a noise given by $w(t)=6\sin(10t)$. We can see that in the noise free case, the parameter error converges to zero in $\delta=1$. In the presence of noise, the proposed hybrid algorithm ensures that the estimation error remains bounded by a function of the magnitude of the measurement noise.

\begin{figure}[!t]
	\begin{center}
		\includegraphics[scale=0.20]{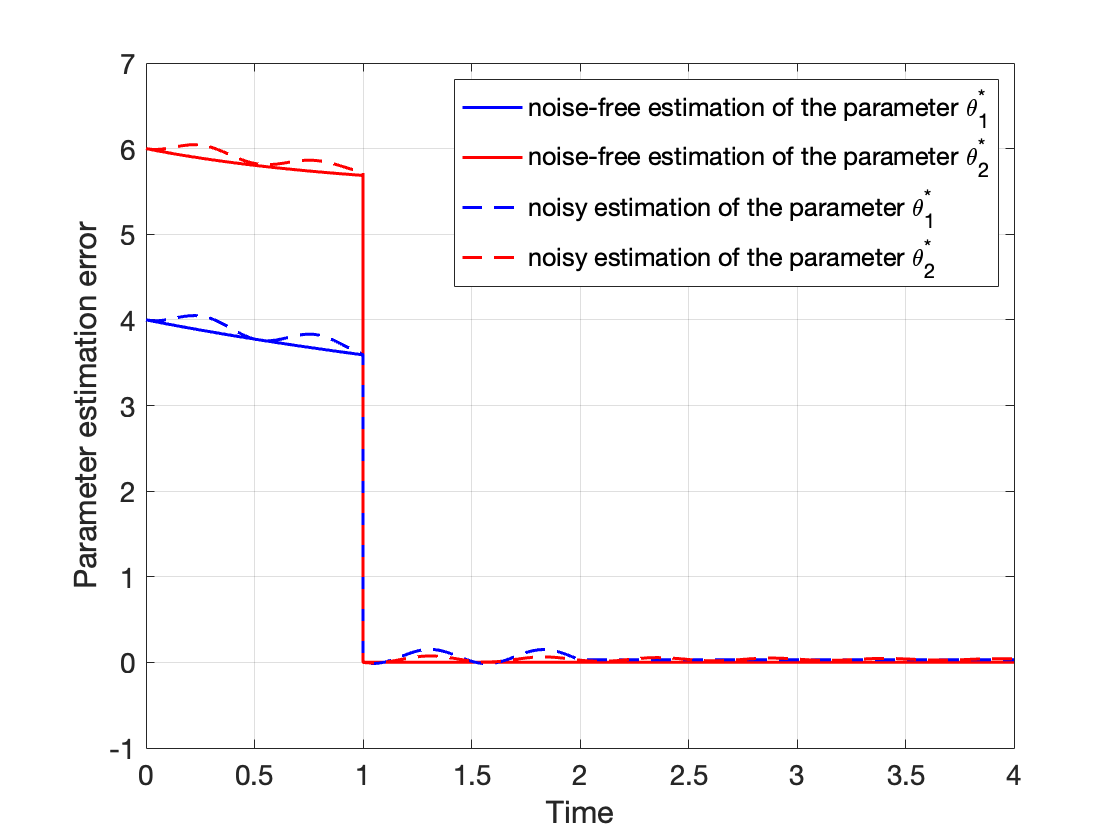}\\
	\end{center}
	\caption{Results of simulation for unknown constant parameters}
	\label{fig:rest1}	
\end{figure}

\subsection{Time-varying Unknown Parameter}
{In this section, we show how the proposed hybrid estimator can be extended to time-varying unknown parameters $\theta^*(t) \in \mathbb{R}^n$ using standard basis function decomposition techniques~\cite{hofmann2008kernel}. Each component $\theta^*_i(t)$ is decomposed as $\theta^*_i(t) = \sum_{k=1}^M a_{k,i}\beta_{k,i}(t)$, where $a_{k,i} \in \mathbb{R}$ are unknown constant coefficients and $\beta_{k,i}(t)$ are known basis functions. Common choices include polynomial basis functions, Gaussian radial basis functions, and sigmoidal basis functions~\cite{hofmann2008kernel}. This transforms the linear regression model with time-varying parameter $y(t) = \theta^{*\top}(t)\phi(t)$ into an equivalent model with constant unknown parameters: $y(t) = \Gamma^\top\bar{\phi}(t)$, where $\Gamma = \row(A_1, \ldots, A_n) \in \mathbb{R}^{nM}$ with $A_i = (a_{1,i}, \ldots, a_{M,i})$, and the augmented regressor is $\bar{\phi}(t) = B(t)\phi(t) \in \mathbb{R}^{nM}$ with $B(t) = \diag(B_1(t), \ldots, B_n(t))$ and $B_j(t) = (\beta_{1,j}(t), \ldots, \beta_{M,j}(t))$. The hybrid estimator presented in Section~\ref{subsection:constant} can then be applied to estimate the constant parameter $\Gamma$, from which the time-varying parameter $\theta^*(t)$ is reconstructed. The simulation examples below demonstrate this approach using polynomial function representations.}

For the numerical example, we consider the scenario where $t \mapsto \theta^*(t) \in \mathbb{R}^2$ is generated using a polynomial basis and given by $\theta^*(t)=(2t-4t^2+t^3,-2t+6t-t^3)$ for all $t\geq 0$. The regressor $\phi=[\phi_1(t), \phi_2(t)]^{\top}$, with $\phi_2(t)=4\exp(-10t)$ and $\phi(t) = 4$ if  $t\in [0,2]$ and $\phi(t) = 0$ if  $t >2 $. This model is transformed using the procedure described above to a linear regression model with a constant unknown parameter and given by $y(t)=\Gamma\bar{\phi}(t)$. In this case, the new regressor $\bar{\phi}(t)$ is described as follows:
$$\bar{\phi}(t)=\begin{bmatrix}
t & t^2 & t^3 & 0 & 0 & 0 \\
0 & 0 & 0 & t & t^2 & t^3
\end{bmatrix}^{\top}
\phi(t).$$
One can easily check that $\bar{\phi}$ is only exciting over the interval $[0, 1]$. The numerical values for the simulations are given by $\theta_1(0,0)=\theta_2(0,0)=(8,-2)$,$\gamma_1=0.05$, $\gamma_2=0.5$. The jump parameter $\delta=1$ and one can check that conditions (\ref{eqn:inequ1})-(\ref{eqn:inequ2}) are satisfied. The simulation results are shown in Figure~\ref{fig:2}. The red line represents the time-varying parameter and the blue line represents the proposed estimator. We can see that one can exactly estimate the unknown parameter at $\delta=1$. 
\begin{figure}[!t]
	\begin{center}
		\includegraphics[scale=0.6]{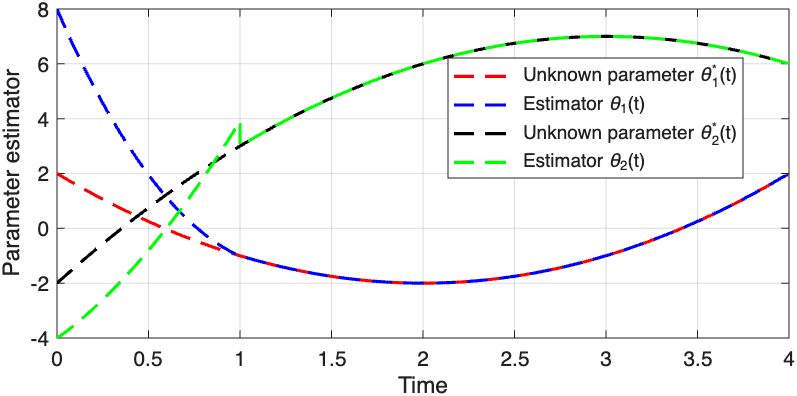}\\
	\end{center}
	\caption{Results of simulation for unknown time-varying parameter }
	\label{fig:2}	
\end{figure}
\subsection{Scalability of the Hybrid Parameter Estimator}
{To empirically validate our computational efficiency claims, we conducted comprehensive benchmarks comparing our hybrid approach with multiple DREM-based approaches~\cite{aranovskiy2016performance,aranovskiy2023drem,wang2019fixed}.
We consider the linear regression model (\ref{eqn:estimator}) with varying dimensions, where we considered sinusoidal regressors of different frequencies. Simulation parameters are set as $T_{sim} = 2\pi$ and $\delta = \pi$. All simulations were performed in Matlab on an Apple MacBook Pro (M1 Max, 64GB RAM). For each dimension $n \in \{10, 20, 50, 100, 200, 500, 1000, 2000, 3000, 5000,\}$, we measure the wall-clock time required to complete the simulation from $t=0$ to $t=T_{sim}$. The results are presented in Table~\ref{tab:computational_comparison}. One can see that the speedup ratio grows from 1.24$\times$ at $n=10$ to 906$\times$ at $n=5000$, demonstrating the increasing computational advantage as dimension grows.
The DREM computation time grows approximately as $O(n^3)$ with dimension, consistent with the cost of computing the determinant and the inverse at every timestep. In contrast, the proposed hybrid estimator fundamentally avoids the DREM transformation overhead by using only $O(n)$ gradient descent operations per timestep during flows, and only performing $O(n^3)$ operations (matrix inversion and computation of the state-transition matrix) exactly once per interval $\delta$ during jumps and not at every timestep. This fundamental difference makes the hybrid estimator particularly advantageous for high-dimensional systems ($n \geq 100$).}
\begin{table}[htbp]
\centering
\caption{Computational Time Comparison: Hybrid vs. DREM-based Approaches}
\label{tab:computational_comparison}
\small
\begin{tabular}{|r|c|c|c|c|c|c|}
\hline
\textbf{\shortstack{Dim\\$(n)$}} &  \textbf{\cite{aranovskiy2016performance}} & \textbf{\cite{wang2019robust}} & \textbf{\cite{aranovskiy2023drem}} & \textbf{\shortstack{Hybrid\\(ours)}} & \textbf{Speedup}\\
\hline
10 &  0.05s & 0.05s & 0.05s & 0.041s & 1.24$\times$ \\
20 &  0.12s & 0.12s & 0.12s & 0.010s & 12.2s$\times$ \\
50 &  0.39s & 0.38s & 0.40s & 0.040s & 9.78$\times$ \\
100 &  1.08s & 1.05s & 1.11s & 0.048s & 22.5$\times$ \\
200 &  5.92s & 5.42s & 10.00s & 0.074s & 80.5$\times$ \\
500 &  18.92s & 18.68s & 24.88s & 0.183s & 104$\times$ \\
1000 &  1m 06s & 57.53 & 1m 19s & 0.42s & 157$\times$ \\
2000 & 5m 27s & 4m 52s & 5m 48s & 1.19s & 274$\times$ \\
3000  & 13m 24s & 13m 29s & 15m 30s & 2.33s & 344$\times$ \\
5000 &  1h 15m & 1h 14m & 1h 21m & 4.94s & 906$\times$ \\
\hline
\end{tabular}

\vspace{0.1cm}

Comparison of our hybrid estimator with multiple DREM-based approaches. Speedup computed as (DREM time \cite{aranovskiy2016performance})/(Hybrid time). Time notation: s (seconds), m (minutes), h (hours).
\label{tab:computational_comparison}
\end{table}

\section{Conclusion}
In this paper, a robust hybrid finite time parameter estimator is proposed. When the unknown parameter is constant, we have shown that the proposed hybrid estimator allows for finite time convergence, while only requiring the regressor to be exciting on a finite interval. For the case of a piecewise constant parameter, the classical persistency of excitation condition is required to guarantee the convergence. The robustness  with respect to time-varying measurements noise is analysed using tools from hybrid systems theory. Finally, numerical examples are provided, showing the practicality of the proposed approach. In future work, we will develop finite-time parameter estimators for more complex nonlinear dynamical models.

\bibliographystyle{ieeetr}

\section{Appendix: Auxiliary results}
\label{sec:App}

\begin{lemma}
\label{lem21}
Consider the linear time-varying system given by
\begin{equation}
\label{eqn:sys}
    \dot{x}=-\gamma\phi(t)\phi^{\top}(t)x
\end{equation}
where $x,\phi:t \mapsto \phi(t) \in \R^n$ and $\gamma > 0$. Let $\Phi(t,t_0)$ be the state transition matrix of (\ref{eqn:sys}) describing the evolution of the state between $t_0$ and $t$ with $t\geq t_0\geq 0$. If the regressor $t \mapsto \phi(t)$ is exciting over $[t_0,T]$ (i.e, $\int_{t_0}^{T}\phi(s)\phi^{\top}(s)ds\geq \eta I$ for $\eta>0$) and there exists $\phi_M \in \R_{\geq 0}$ such that $|\phi(s)| \leq \phi_M$ for all $s\in [t_0,T]$ then 
\begin{equation}
\label{eqn:bound1}
    |\Phi(T,t_0)| \leq \sqrt{\left(1-\frac{2\eta\gamma}{(1+\phi_M^2\gamma(T-t_0))^2}\right)}
\end{equation}
and each solution $t \mapsto x(t)$ to (\ref{eqn:sys}) satisfies
\begin{equation}
\label{eqn:expstab}
    |x(t)|\leq \kappa e^{-\lambda (t-t_0)}|x(t_0)|.
\end{equation}
for all $t \in [t_0,T]$, with $\kappa \geq 0$ and $$\lambda=-\frac{\log\left(1-\frac{2\eta\gamma}{(1+\phi_M^2\gamma(T-t_0))^2}\right)}{2(T-t_0)}.$$
Moreover, if $1-\phi_M^2\gamma(T-t_0)>0$ then we have 
\begin{equation}
\label{eqn:bound2}
    |\Phi^{-1}(T,t_0)| \leq \sqrt{\left(1+\frac{2\gamma\phi_M^2(T-t_0)}{(1-\phi_M^2\gamma(T-t_0))^2}\right)}
\end{equation}
\end{lemma}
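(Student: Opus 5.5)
The bound (\ref{eqn:bound1}) will follow from a direct energy computation over the excitation window. For any solution of (\ref{eqn:sys}),
\begin{equation*}
\frac{d}{dt}|x|^2=-2\gamma(\phi^{\top}x)^2,
\end{equation*}
which integrates to
\begin{equation*}
|x(T)|^2=|x(t_0)|^2-2\gamma\int_{t_0}^{T}(\phi^{\top}(s)x(s))^2ds .
\end{equation*}
In particular $|\Phi(t,t_0)|\leq 1$, so the task is to lower bound the dissipated integral by a multiple of $|x(t_0)|^2$.

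The step I expect to be the main obstacle is comparing $\phi^{\top}(s)x(s)$ with $\phi^{\top}(s)x(t_0)$, so that the excitation hypothesis can be used. I will write $x(s)=x(t_0)-\gamma\int_{t_0}^{s}\phi(r)\phi^{\top}(r)x(r)dr$. Set $e(s):=\int_{t_0}^{s}\phi(r)\phi^{\top}(r)x(r)dr$ and $\Delta:=T-t_0$. By Cauchy--Schwarz,
\begin{equation*}
|e(s)|^2\leq \phi_M^2\Delta\int_{t_0}^{T}(\phi^{\top}x)^2 .
\end{equation*}
Writing $E:=\int_{t_0}^{T}(\phi^{\top}x)^2$, I obtain in $L^2(t_0,T)$ that
\begin{equation*}
\|\phi^{\top}x(t_0)\|\leq\|\phi^{\top}x\|+\gamma\|\phi^{\top}e\|\leq \sqrt{E}\,(1+\gamma\phi_M^2\Delta).
\end{equation*}
Here the last estimate needs care: I intend to use $\|\phi^{\top}e\|^2\leq \phi_M^2\Delta\sup_s|e(s)|^2\leq \phi_M^4\Delta^2E$.

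Using the excitation condition, $\|\phi^{\top}x(t_0)\|^2\geq\eta|x(t_0)|^2$. Hence $E\geq \eta|x(t_0)|^2/(1+\gamma\phi_M^2\Delta)^2$, and substituting this into the energy identity gives exactly (\ref{eqn:bound1}).

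For (\ref{eqn:expstab}) I will define $\rho:=\sqrt{1-2\eta\gamma/(1+\phi_M^2\gamma\Delta)^2}=e^{-\lambda\Delta}$, with $\lambda$ as in the statement. Since $t\in[t_0,T]$ and $|\Phi(t,t_0)|\leq1$, I have $|x(t)|\leq |x(t_0)|\leq e^{\lambda\Delta}e^{-\lambda(t-t_0)}|x(t_0)|$. So $\kappa=e^{\lambda\Delta}=1/\rho$ works. If the estimate is to be used over longer horizons with repeated windows, as in Proposition~\ref{pro:robust2}, I would iterate the contraction $\rho$ per window of length $\Delta$ via the semigroup property, which gives the same form.

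For (\ref{eqn:bound2}) I will run time backward. The inverse $\Phi^{-1}(T,t_0)=\Phi(t_0,T)$ maps $x(T)$ to $x(t_0)$. The energy identity read as $|x(t_0)|^2=|x(T)|^2+2\gamma E$ reduces the task to bounding $E$ by $|x(T)|^2$. I will write $x(s)=x(T)+\gamma\int_s^T\phi\phi^{\top}x\,dr$, and the same Cauchy--Schwarz estimate gives $\sqrt{E}\leq\phi_M\sqrt{\Delta}|x(T)|+\gamma\phi_M^2\Delta\sqrt{E}$. When $1-\phi_M^2\gamma\Delta>0$ this rearranges to $E\leq \phi_M^2\Delta|x(T)|^2/(1-\gamma\phi_M^2\Delta)^2$, and then $|x(t_0)|^2\leq(1+2\gamma\phi_M^2\Delta/(1-\phi_M^2\gamma\Delta)^2)|x(T)|^2$, which is (\ref{eqn:bound2}).
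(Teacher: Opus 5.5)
Your proof is correct and follows the paper's route: the energy identity for $\tfrac12|x|^2$, the integral representation of $x(s)$ anchored at $t_0$ (respectively $T$), Cauchy--Schwarz, and the excitation bound. The only difference is that you use the triangle inequality in $L^2(t_0,T)$ where the paper uses weighted Young-type inequalities for $(a-b)^2$ and $(a+b)^2$ with $\rho_1=1/(\gamma\phi_M^2\Delta)$ and $\rho_2=\phi_M^2\gamma\Delta/(1-\phi_M^2\gamma\Delta)$; these are exactly the optimal weights and give your constants. One small point to add: $\rho>0$, since $\eta\le\phi_M^2\Delta$ gives $2\gamma\eta\le 2\gamma\phi_M^2\Delta<(1+\gamma\phi_M^2\Delta)^2$, so $\lambda$ and $\kappa=1/\rho$ are well defined.
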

\begin{proof} Let $V(x)=\frac{1}{2}|x|^2$. For the system (\ref{eqn:sys}) we have that
\begin{equation}
\label{eqn:lyap}
    \dot{V}(x,t)=-\gamma x^{\top}\phi(t)\phi^{\top}(t)x=-\gamma|\phi^{\top}(t)x|^2
\end{equation}

By integrating this expression between $t_0$ and $T$ one gets the following chain of inequalities: 
	\begin{align}\notag
	\Delta V:=&V(x(T))-V(x(t_0)) \nonumber \\ =&-\gamma\int_{t_0}^T |\phi^{\top}(s)x(s)|^2ds \nonumber \\\notag
	=& -\gamma\int_{t_0}^T \left|\phi^{\top}(s)\left(x(t_0)-\int_{t_0}^s\gamma \phi(v)\phi^{\top}(v)x(v)dv\right)\right|^2ds\\ \leq& -\frac{\rho_1}{1+\rho_1}\gamma\int_{t_0}^T \left|\phi^{\top}(s)x(t_0)\right|^2ds +\\&\rho_1\gamma\int_{t_0}^T\left|\phi^{\top}(s)\int_{t_0}^s\gamma\phi(u)\phi^{\top}(v)x(v)dv\right|^2ds \nonumber \\ \notag
	\leq& -\frac{\rho_1}{1+\rho_1}\gamma\eta|x(t_0)|^2+\\ &\rho_1\gamma^3\phi_M^4(T-t_0)^2\int_{t_0}^T |\phi^{\top}(s)x(s)|^2ds \\ \notag \leq& -2\frac{\rho_1\gamma\eta}{1+\rho_1}V(x(t_0))- \\&\rho_1\gamma^2\phi_M^4(T-t_0)^2(V(x(T)-V(x(t_0)).
	\end{align}
where the second equality comes from the fact that the solution $x(s)$ to the system (\ref{eqn:sys}) is given by
$x(s)=x(t_0)-\int_{t_0}^s\gamma\phi(v)\phi^{\top}(v)x(v)dv$, the first inequality follows from the fact that $(a-b)^2 \geq \frac{\rho_1}{1+\rho_1}a^2-\rho_1 b^2$ for all $a,b \in \mathbb{R}$ and for all $\rho_1 \geq 0$, the second inequality comes from the fact that $\phi$ is bounded and exciting over $[t_0,T]$ and the use of Cauchy-Schwartz inequality. Finally the last inequality follows from the fact that $V(x(t_0))=\frac{1}{2}|x(t_0)|^2$. Hence we have that 
\begin{equation}
\label{eqn:transi}
    V(x(T))\leq (1-\alpha)V(x(t_0)) 
\end{equation}
with 
\begin{equation}
\label{eqn:alpha}
\alpha=\frac{2\rho_1\eta\gamma}{(1+\rho_1)(1+\rho_1\phi_M^4(T-t_0)^2\gamma^2)}
\end{equation}
Since the last equality holds for all $x(t_0)\in \R^n$, we have:
\begin{align*}
 |\Phi(T,t_0)|=&\max\limits_{x(t_0)\in \R^n\setminus\{0\}} \frac{|\Phi(T,t_0)x(t_0)|}{|x(t_0)|}\\=&\max\limits_{x(t_0)\in \R^n\setminus\{0\}} \frac{\sqrt{2V(x(T))}}{\sqrt{2V(x(t_0))}} \leq \sqrt{1-\alpha}
\end{align*}

By taking $\rho_1=\frac{1}{\gamma\phi_M^2(T-t_0)}$ one gets (\ref{eqn:bound1}). Let us mention that with this choice of $\rho_1$ and using the fact that $$\eta I \leq \int_{t_0}^T\phi(v)\phi^{\top}(v)dv \leq \phi_M^2(T-t_0)I$$
we have that $\frac{2\eta\gamma}{(1+\phi_M^2\gamma(T-t_0))^2} <1$.
Moreover, (\ref{eqn:expstab}) follows directly from (\ref{eqn:transi}) and (\ref{eqn:alpha}) by choosing $$\lambda=-\frac{\log\left(1-\frac{2\eta\gamma}{(1+\phi_M^2\gamma(T-t_0))^2}\right)}{2(T-t_0)}.$$

To show that (\ref{eqn:bound2}) holds, we integrate the expression in (\ref{eqn:lyap}) between time instances $t_0$ and $t$ and we get the following chain of inequalities:

	\begin{align}\notag
	\Delta V=&V(x(T))-V(x(t_0)) \\  =&-\gamma\int_{t_0}^T |\phi^{\top}(s)x(s)|^2ds. \nonumber \\\notag
	=& -\gamma\int_{t_0}^T |\phi^{\top}(s)\left(x(T)+\int_{s}^T\gamma \phi(v)\phi^{\top}(v)x(v)dv\right)|^2dv\\  \geq& -\gamma(1+\rho_2)\int_{t_0}^T |\phi^{\top}(s)x(T)|^2ds\\&-\frac{(1+\rho_2)\gamma}{\rho_2}\int_{t_0}^T |\phi^{\top}(s)\int_{s}^T\gamma\phi(v)\phi^{\top}(v)x(v)dv|^2ds \nonumber \\  
 \\ \notag
	\geq&-\gamma(1+\rho_2)(T-t_0)\phi_M^2|x(T)|^2\\&-\frac{(1+\rho_2)\gamma^3\phi_M^4(T-t_0)^2}{\rho_2}\int_{t_0}^T |\phi^{\top}(s)x(s)|^2ds  \\ \notag \geq& -2\gamma(1+\rho_2)(T-t_0)\phi_M^2V(x(T))\\&+\frac{(1+\rho_2)\gamma^2\phi_M^4(T-t_0)^2}{\rho_2} (V(x(T)-V(x(t_0)).
	\end{align}
where the second equality comes from the fact that the solution $x(s)$ to the system (\ref{eqn:sys}) is given by
$x(s)=x(T)+\int_{s}^T\gamma\phi(v)\phi^{\top}(v)x(v)dv$, the first inequality follows from the fact that $(a+b)^2 \leq (1+\rho_2)a^2+\frac{1+\rho_2}{\rho_2} b^2$ for all $a,b \in \mathbb{R}$ and for all $\rho_2 \geq 0$, the second inequality follows from the use of Cauchy-Schwartz inequality. Finally the last inequality follows from the fact that $V(x(T))=\frac{1}{2}|x(T)|^2$.
Hence, by taking $\rho_2=\frac{\phi_M^2\gamma(T-t_0)}{1-\phi_M^2\gamma(T-t_0)}$, which is positive since $1-\phi_M^2\gamma(T-t_0)>0$, we have that
\begin{equation*}
    V(x(t_0))\leq (1+\beta)V(x(T))
\end{equation*}
with 
\begin{equation}
\label{eqn:beta}
\beta=\frac{2\gamma\phi_M^2(T-t_0)}{(1-\phi_M^2\gamma(T-t_0))^2}.   
\end{equation} 
Since the last equality holds for all $x(T)\in \R^n$, we have:
\begin{align*}
 |\Phi^{-1}(T,t_0)|=&\max\limits_{x(T)\in \R^n\setminus\{0\}} \frac{|\Phi^{-1}(T,t_0)x(T)|}{|x(T)|}\\=&\max\limits_{x(T)\in \R^n\setminus\{0\}} \frac{\sqrt{2V(x(t_0))}}{\sqrt{2V(x(T))}} \leq \sqrt{1+\beta},
 \end{align*}
and one gets (\ref{eqn:bound2}). 
\end{proof}

We also recall the following auxiliary lemma from~\cite{horn2012matrix}:

\begin{lemma}
\label{lem}
Given a matrix $A$, if $|A| <1$ then $(I-A)$ is invertible and $|(I-A)^{-1}|\leq (1-|A|)^{-1}$.
\end{lemma}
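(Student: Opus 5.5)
The statement is Lemma~\ref{lem}, a standard Neumann-series fact, and I would prove it directly from submultiplicativity of the induced Euclidean norm $|\cdot|$.

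\emph{Step 1: invertibility.} Suppose $(I-A)x=0$ for some $x$. Then $x=Ax$, so $|x|=|Ax|\leq |A||x|$. Since $|A|<1$, this forces $|x|=0$. Hence $I-A$ has trivial kernel, and being square it is invertible. Alternatively, one can build the inverse explicitly. Consider the partial sums $S_N=\sum_{k=0}^N A^k$. Because $|A^k|\leq |A|^k$, the series $\sum_{k\geq 0} A^k$ converges absolutely; here the geometric series $\sum_k |A|^k$ is finite since $|A|<1$. Let $S$ be its limit. The telescoping identity $(I-A)S_N=S_N(I-A)=I-A^{N+1}$ together with $|A^{N+1}|\leq |A|^{N+1}\to 0$ gives $(I-A)S=S(I-A)=I$.

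\emph{Step 2: the norm bound.} From the series representation,
\[
|(I-A)^{-1}|\leq \sum_{k\geq 0}|A|^k=(1-|A|)^{-1}.
\]
A series-free route gives the same bound. Take any $y$ and set $x=(I-A)^{-1}y$. Then $x=y+Ax$, so $|x|\leq |y|+|A||x|$, which rearranges to $|x|\leq (1-|A|)^{-1}|y|$. Taking the supremum over unit vectors $y$ yields the claim.

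\emph{Main obstacle.} There is essentially none. The only point needing care is that $|\cdot|$ on matrices must be the operator norm induced by the Euclidean norm. The paper uses it that way in the proof of Lemma~\ref{lem21}, where $|\Phi(T,t_0)|$ is a maximum over $x(t_0)$. Submultiplicativity, and hence $|A^k|\leq|A|^k$, holds for this norm. I would state this convention explicitly at the start and otherwise cite \cite{horn2012matrix}.
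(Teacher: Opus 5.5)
Your proof is correct: the kernel argument (or the Neumann series) gives invertibility, and either $\sum_k|A^k|\le\sum_k|A|^k$ or your rearrangement $|x|\le|y|+|A||x|$ gives the bound. The paper does not prove this lemma but recalls it from \cite{horn2012matrix}, and your argument is the standard one behind that citation. Your caveat that $|\cdot|$ must be the induced operator norm is right: it is what makes $|I|=1$, and the bound fails, e.g., for the Frobenius norm with $A=0$ and $n\ge 2$. This matches how the paper uses $|\Phi(T,t_0)|$ in Lemma~\ref{lem21}.
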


\end{document}